\algnewcommand\algorithmicparfor{\textbf{for}}
\algnewcommand\algorithmicpardo{\textbf{do in parallel}}
\algnewcommand\algorithmicforeach{\textbf{for each}}
\algnewcommand{\IfThenElse}[3]{
  \State \algorithmicif\ #1\ \algorithmicthen\ #2\ \algorithmicelse\ #3}
\renewcommand{\Function}[2]{%
  \csname ALG@cmd@\ALG@L @Function\endcsname{#1}{#2}%
  \def\jayden@currentfunction{#1}%
}
\newcommand{\funclabel}[1]{%
  \@bsphack
  \protected@write\@auxout{}{%
    \string\newlabel{#1}{{\jayden@currentfunction}{\thepage}}%
  }%
  \@esphack
}
\newtheorem{theorem}{Theorem}
\newtheorem{lemma}{Lemma}
\newtheorem{definition}{Definition}
\newtheorem{property}{Property}
\newcommand{\beq}{\begin{equation}}
\newcommand{\eeq}{\end{equation}}
\newcommand{\bea}{\begin{eqnarray}}
\newcommand{\eea}{\end{eqnarray}}
\begin{document}

\title[Parallel Peeling of Bipartite Networks]{Parallel 
Peeling of Bipartite Networks for Hierarchical Dense Subgraph Discovery
}


\author{Kartik Lakhotia}
\affiliation{%
  \institution{University of Southern California}
   \country{USA} 
 }
\email{klakhoti@usc.edu}

\author{Rajgopal Kannan}
\affiliation{%
  \institution{US Army Research Lab}
    \country{USA}
}
\email{rajgopal.kannan.civ@mail.Mil}

\author{Viktor Prasanna}
\affiliation{%
  \institution{University of Southern California}
    \country{USA}
}
\email{prasanna@usc.edu}

\begin{abstract}
    Motif-based graph decomposition is widely used to mine hierarchical 
    dense structures in graphs.
    In bipartite graphs, \textit{wing} and \textit{tip} decomposition 
    construct a hierarchy of butterfly~(2,2-biclique) dense edge 
    and vertex induced subgraphs, respectively. They have applications
    in several domains including e-commerce, recommendation systems and document analysis.
    
    Existing decomposition algorithms use a bottom-up approach that
    constructs the hierarchy in an increasing order of subgraph density.
    They iteratively select the entities~(edges or vertices) with minimum 
    support~(butterfly count) and \emph{peel} them i.e. remove them from them graph and update the support of other entities. 
    The amount of butterflies in real-world bipartite graphs makes
    bottom-up peeling computationally demanding. 
    Furthermore, the strict order of peeling entities results in a large
    number of iterations with sequential dependencies on preceding 
    support updates. 
    Consequently, parallel algorithms based on bottom up peeling
    can only utilize
    intra-iteration parallelism and require heavy synchronization, leading to poor scalability.
    

    

    In this paper, we propose a novel Parallel Bipartite Network peelinG (PBNG) framework which adopts a two-phased peeling approach to relax the order of peeling, and in turn, 
    dramatically reduce synchronization. 
    The first phase divides the decomposition hierarchy into few partitions, 
    and requires little synchronization to compute such partitioning.
    The second phase concurrently processes all of these partitions to generate 
    individual levels in the final decomposition hierarchy, and 
    requires no global synchronization. 
    Effectively, both phases of PBNG parallelize computation
    across multiple levels of decomposition hierarchy,
    which is not possible with bottom-up peeling.
    The two-phased peeling further enables batching optimizations that 
    dramatically improve the computational efficiency of PBNG. 
    The proposed approach represents a non-trivial
    generalization of our prior work on a two-phased vertex peeling 
    algorithm~\cite{lakhotia2020receipt}, and its adoption for 
    both tip and wing decomposition. 
    
    We empirically evaluate PBNG using several real-world bipartite graphs 
    and demonstrate radical improvements over the existing approaches. On a 
    shared-memory $36$ core server,
    PBNG achieves up to $19.7\times$ self-relative parallel 
    speedup. Compared to the state-of-the-art parallel framework
    P\textsc{ar}B\textsc{utterfly}, PBNG reduces synchronization 
    by up to $15260\times$ and execution time by up to $295\times$. Furthermore, it achieves up to $38.5\times$ 
    speedup over state-of-the-art algorithms specifically tuned
    for wing decomposition.
    We also present the first decomposition results of some 
    of the largest public real-world datasets, which 
    PBNG can peel in few minutes/hours, but 
    algorithms in current practice fail to process even in several days.
    Our source code is made available at  \url{https://github.com/kartiklakhotia/RECEIPT}.

\end{abstract}



%


%
%

\keywords{Graph Algorithms, Parallel Graph Analytics, Dense Graph Mining, Bipartite Graphs}

\maketitle
\pagestyle{empty}

\renewcommand{\shortauthors}{Lakhotia et al.}

\section{Introduction}
A bipartite graph $G(W=(U, V), E)$ is a special graph whose vertices can be partitioned into
two disjoint sets $U(G)$ and $V(G)$ such that any edge $e\in E(G)$ connects a vertex from set $U(G)$ 
with a vertex from set $V(G)$. 
Several real-world systems naturally exhibit bipartite relationships, such as consumer-product purchase network of an e-commerce website~\cite{consumerProduct}, user-ratings data in a recommendation system~\cite{he2016ups, lim2010detecting}, author-paper network of a scientific field~\cite{authorPaper}, group memberships in a social network~\cite{orkut} etc.
Due to the rapid growth of data produced in these domains, efficient 
mining of dense structures in bipartite graphs has become a popular research topic~\cite{wangButterfly, wangBitruss, zouBitruss, sariyucePeeling, shiParbutterfly,lakhotia2020receipt}. 

Nucleus decomposition is commonly used to mine hierarchical dense subgraphs where 
minimum clique participation of an edge in a subgraph determines its level in the hierarchy~\cite{sariyuce2015finding}. 
Truss decomposition is arguably the most popular 
case of nucleus decomposition which uses triangles ($3$-cliques) to measure 
subgraph density~\cite{spamDet, graphChallenge, trussVLDB, sariyuce2016fast, bonchi2019distance, wen2018efficient}. However, truss decomposition is not directly applicable for bipartite graphs as they do not have triangles. One way to circumvent this issue is to
compute unipartite projection of a bipartite graph $G$
which contains an edge between each pair of vertices with common neighbor(s) in $G$. 
But this approach suffers from (a) information loss which can impact quality
of results, and (b) explosion in dataset size which can 
restrict its scalability~\cite{sariyucePeeling}. 

Butterfly ($2,2-$biclique/quadrangle) is the smallest cohesive motif in bipartite 
graphs. Butterflies can be used to directly analyze bipartite graphs and have drawn significant research 
interest in the recent years~\cite{wangRectangle,sanei2018butterfly,sanei2019fleet,shiParbutterfly, wangButterfly, he2021exploring, sariyucePeeling, wang2018efficient}. Sariyuce and Pinar~\cite{sariyucePeeling} use butterflies as a density indicator to define 
the notion of \textit{$k-$wings}
and \textit{$k-$tips}, as maximal bipartite subgraphs where each edge and vertex, 
respectively, is involved in at least $k$ butterflies.
For example, the graph shown in fig.\ref{fig:wingDemo}a
is a $1-$wing since each edge participates in at least
one butterfly.
Analogous to $k-$ trusses~\cite{cohen2008trusses}, $k-$wings~($k-$tips) represent hierarchical dense structures in the sense that a $(k+1)-$wing~($(k+1)-$tip) is a subgraph of a $k-$wing~($k-$tip).

In this paper, we explore parallel algorithms for \textit{wing}\footnote{$k-$wing and wing decomposition 
are also known as $k-$bitruss and bitruss decomposition, respectively.}  and \textit{tip decomposition} analytics,
that construct the entire hierarchy of $k-$wings
and $k-$tips in a bipartite graph,
respectively. 
For space-efficient representation of
the hierarchy, these analytics output \emph{wing number}
of each edge $e$ or \emph{tip number} of each vertex 
$u$, which represent the densest level of
hierarchy that contains $e$ or $u$, respectively. 
Wing and tip
decomposition have several real-world applications such as:
\begin{itemize}[leftmargin=*]
    \item Link prediction in recommendation systems 
    or e-commerce websites that contain
    communities of users with
    common preferences or  purchase history~\cite{he2021exploring,leicht2006vertex,navlakha2008graph,communityDet}.
    
    \item Mining nested 
    communities in  social 
    networks or discussion forums, where users affiliate with broad groups
    and more specific sub-groups 
    based on their interests.
    ~\cite{he2021exploring}.
    
    \item Detecting spam reviewers that collectively rate selected
    items in rating
    networks~\cite{mukherjee2012spotting,fei2013exploiting,lim2010detecting}.
    
    \item Document clustering
    by mining co-occurring
    keywords and groups of documents containing them~\cite{dhillon2001co}.
    
    \item Finding nested groups of researchers from
    author-paper networks~\cite{sariyucePeeling} with varying degree of collaboration.
\end{itemize}

\begin{figure}[htbp]
    \centering
\includegraphics[width=0.9\linewidth]{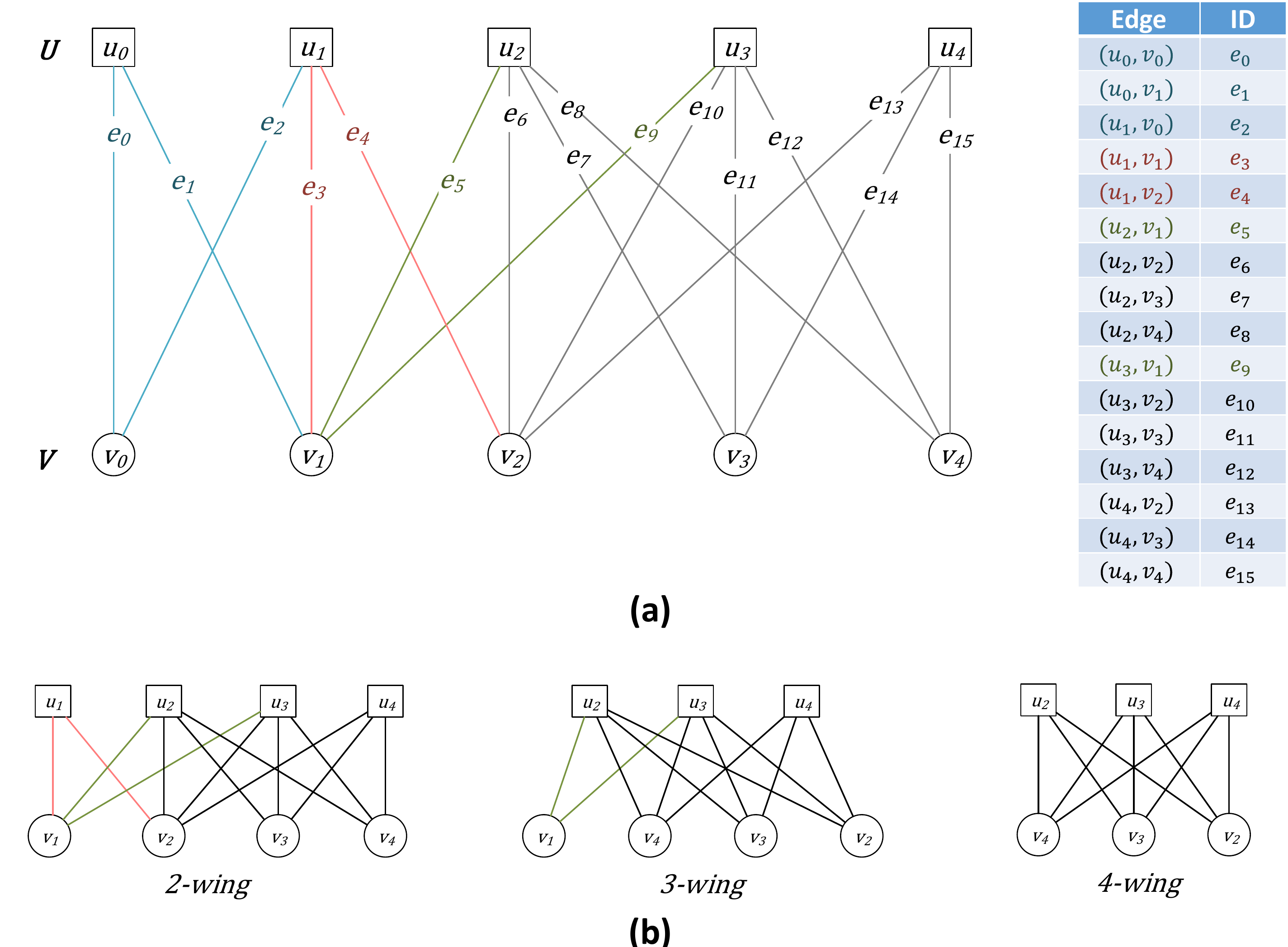}     
\caption{(a) Bipartite graph $G(U, V, E)$ which is also a $1$-wing. (b) Wing decomposition 
hierarchy of $G(U, V, E)$ -- edges colored blue, red, green and black have wing numbers of 1, 2, 3 and 4, respectively.}
    \label{fig:wingDemo}
\end{figure}

Existing algorithms for decomposing bipartite graphs typically employ an iterative bottom-up 
peeling approach~\cite{sariyucePeeling, shiParbutterfly}, wherein entities~(edges 
and vertices for wing and tip decomposition, respectively) with the minimum \emph{support} 
(butterfly count) are \emph{peeled} in each iteration. Peeling an entity $l$ involves deleting 
it from the graph and updating the support of other entities that share butterflies with $l$.
However, the huge number of butterflies in bipartite graphs
makes bottom-up peeling computationally demanding and 
renders large graphs intractable for 
decomposing by sequential algorithms. 
For example, \emph{trackers} -- a bipartite network of internet domains and the trackers 
contained in them, has $140$ million edges but more than $20$ trillion butterflies.

Parallel computing is widely used to scale such high complexity analytics to large
datasets~\cite{Park_2016,smith2017truss,10.1145/3299869.3319877}. However, the bottom-up peeling approach used in existing parallel frameworks~\cite{shiParbutterfly} severely
restricts parallelism by peeling entities in a strictly increasing order of their entity numbers~(wing or tip numbers). Consequently, it takes
a very large number of iterations to peel an entire graph, for example, 
it takes $>31$ million iterations to peel all edges of the \emph{trackers}
dataset using bottom-up peeling. 
Moreover, each peeling iteration is
\emph{sequentially dependent} on support updates in all 
prior iterations, thus mandating synchronization of parallel 
threads before each iteration.
Hence, the conventional approach of parallelizing workload within each 
iteration~\cite{shiParbutterfly} suffers from heavy thread synchronization, 
and poor parallel scalability. 

In this paper, we propose a novel two-phased peeling approach for
generalized bipartite graph decomposition. Both phases in the proposed approach exploit 
parallelism \emph{across multiple levels} of the decomposition 
hierarchy to 
drastically reduce the number of parallel peeling iterations and in turn, the amount of thread synchronization.  
The first phase creates a coarse hierarchy which divides 
the set of entity numbers into \textit{few non-overlapping} 
ranges. It accordingly partitions the entities by iteratively 
peeling the ones with support in the lowest range. 
A major implication of range-based partitioning is that each 
iteration peels a large number of entities corresponding to a wide range of entity numbers. 
This results in \textit{large parallel workload} per iteration
and \textit{little synchronization}.

The second phase concurrently processes multiple partitions to compute the exact entity numbers. Absence of overlap between 
corresponding entity number ranges enables every 
partition to be peeled \textit{independently} of others.  
By assigning each partition exclusively to a single thread, 
this phase achieves parallelism with \emph{no global \looseness=-1synchronization}. 

We implement the two-phased peeling as a part of our Parallel Bipartite Network peelinG~(PBNG) framework which adapts this approach for both wing and tip decomposition. PBNG further encapsulates novel
workload optimizations that exploit batched peeling of
numerous entities in the first phase to dramatically improve computational efficiency of decomposition. Overall, our contributions can be 
summarized as follows:
\begin{enumerate}[leftmargin=*]
    \itemsep0em
    \item We propose a novel two-phased peeling approach for bipartite graph decomposition,
    that parallelizes workload across different levels of decomposition hierarchy. 
    The proposed methodology is implemented in our PBNG framework which generalizes it for
    both vertex and edge peeling. To the best of our knowledge, this 
    is the first approach to utilize parallelism
    across the levels of both wing and tip decomposition
    hierarchies. 
    
    \item Using the proposed two-phased peeling, we achieve 
    a dramatic reduction in the number of parallel peeling
    iterations and in turn, the thread synchronization.
    As an example, wing decomposition of \textit{trackers} dataset in PBNG requires only $2034$ parallel 
    peeling iterations, which is four orders 
    of magnitude less than existing parallel algorithms. 
    
    \item We develop novel optimizations 
    that are highly effective 
    for the two-phased peeling approach
    and dramatically reduce the work done by PBNG. 
    As a result, PBNG traverses only $3.3$
    trillion wedges during tip decomposition of internet domains 
    in \emph{trackers} dataset, compared to $211.1$ trillion
    wedges traversed by the state-of-the-art.
\end{enumerate}

We empirically evaluate PBNG on several real-world bipartite
graphs and demonstrate its superior scalability compared to state-of-the-art. 
We show that PBNG significantly expands the limits of 
current practice 
by decomposing some of the largest publicly available datasets in few minutes/hours, 
that existing algorithms cannot 
decompose in multiple \looseness=-1days. 

In a previous work~\cite{lakhotia2020receipt}, we developed a two-phased 
algorithm for tip decomposition~(vertex peeling). 
This paper generalizes the two-phased approach for peeling any set of
entities within a bipartite graph. We further present non-trivial 
techniques to adopt the two-phased peeling for wing decomposition~(edge peeling), which is known to reveal better quality dense subgraphs than 
tip decomposition~\cite{sariyucePeeling}.

\section{Background}
In this section, we formally define the problem statement and review existing methods
for butterfly counting and bipartite graph decomposition. Note that counting
is used to initialize \textit{support} (running count of butterflies) of each 
vertex or edge before peeling, and also inspires some optimizations to 
improve efficiency of decomposition.

Table~\ref{table:notations} lists some notations used in this paper. 
For description of a general approach, we use the term \emph{entity} to 
denote a vertex (for tip decomposition) or an edge (for wing decomposition), and \emph{entity number} to denote tip
or wing number (sec.\ref{sec:bottomup}), respectively. Correspondingly, notations $\bowtie_l$ and $\theta_l$ denote the support and entity number 
of entity $l$.


\begin{table}[htbp]
\centering
\caption{Notations and their definition}
\label{table:notations}
\begin{tabular}{|c|l|}
\hline
$G(W=(U, V), E)$                                 & bipartite graph $G$ with disjoint vertex sets $U(G)$ and $V(G)$, and edges $E(G)$ \\ \hline
$n/m$                                           & no. of  vertices in $G$ i.e. $n=|W|$ / no. of edges in $G$ i.e. $m = |E|$                                                                                               \\ \hline
$\alpha$ & arboricity of $G$ \cite{chibaArboricity} \\ \hline
$N_u/d_u$                                            & neighbors of vertex $u$ / degree of vertex $u$                                                                                                                  \\ \hline
$\bowtie_u^G$ / $\bowtie_e^G$                         &
no. of butterflies in $G$ that contain vertex $u$ / edge $e$                                                         \\ \hline

$\bowtie_u$ / $\bowtie_e$                         & support (runing count of butterflies) of vertex $u$ / edge $e$                                                                                \\ \hline
$\bowtie_U$ / $\bowtie_E$ &  support vector of all vertices in set $U(G)$ / all edges in $E(G)$\\ \hline

$\theta_u$ / $\theta_e$                                       & tip number of vertex $u$ / wing number of edge $e$                                                                                                                    \\ \hline
$\theta^{max}_U$/$\theta^{max}_E$ & maximum tip number of vertices in $U(G)$ / maximum wing number of edges in $E(G)$ \\ \hline

$P$                                                & number of vertex/edge partitions created by PBNG                                                                                    \\ \hline
$T$                                                & number of threads                                                                                                                             \\ \hline
\end{tabular}
\end{table}

\subsection{Butterfly counting}\label{sec:counting}
A butterfly (2,2-bicliques/quadrangle) can be viewed as a combination of two wedges 
with common endpoints. For example, in fig.\ref{fig:wingDemo}a, both wedges $(e_0, e_1)$ and $(e_2, e_3)$ have end points $v_0$ and $v_1$, and form a butterfly.
A simple way to count butterflies is to explore all wedges and combine the ones with 
common end points. However, this is computationally inefficient with complexity $\mathcal{O}\left(\sum_{u\in U}\sum_{v\in N_u}d_v\right) = \mathcal{O}\left(\sum_{v\in V}d_v^2\right)$ (if we 
use vertices in $U$ as end points). 

Chiba and Nishizeki \cite{chibaArboricity} developed an efficient vertex-priority 
quadrangle counting algorithm in which starting from each vertex $u$, 
only those wedges are expanded where $u$ has the highest degree. 
It has a theoretical complexity of  $\mathcal{O}\left(\sum_{(u,v)\in E}\min{(d_u, d_v)}\right) = \mathcal{O}\left(\alpha \cdot m\right)$, which is state-of-the-art for 
butterfly counting.
Wang et al.\cite{wangButterfly} further propose a cache-efficient version of this algorithm
that traverses wedges such that the degree of the last vertex is greater than the that of 
the start and middle vertices (alg.\ref{alg:counting}, line 10). Thus, wedge explorations 
frequently end at a small set of high degree vertices that can be cached.

The vertex-priority algorithm can be easily parallelized by concurrently processing 
multiple start vertices~\cite{shiParbutterfly, wangButterfly}. 
In PBNG, we use the per-vertex and per-edge counting variants of the parallel 
algorithm~\cite{shiParbutterfly, wangButterfly}, as shown in alg.\ref{alg:counting}. To avoid conflicts,
each thread is provided an individual $n$-element $wedge\_count$ array (alg.\ref{alg:counting}, line 5) for wedge aggregation, 
and butterfly counts of entities are incremented using atomic operations. 

\begin{algorithm}[htbp]
	\caption{Counting per-vertex and per-edge butterflies (\texttt{pveBcnt})}
	\label{alg:counting}
	\begin{algorithmic}[1]
	    \Statex{\textbf{Input:} Bipartite Graph $G(W=(U, V), E)$} 
	    \Statex{\textbf{Output:} Butterfly counts -- $\bowtie_u$ for each $u\in W(G)$, and $\bowtie_e$ for each $e\in E(G)$}
        \State{$\bowtie_u\ \leftarrow 0$ for each $u\in W(G)$;\ \ \ 
        $\bowtie_e\ \leftarrow 0$ for each  $e\in E(G)$ }\Comment{\emph{Initialization}}

        \State{Relabel vertices $W(G)$ in decreasing order of degree}\Comment{\emph{Priority assignment}}
        \ParForEach{$u\in W(G)$}
        \State{Sort $N_u$ in increasing order of new labels}
        \EndParForEach
        \ParForEach{vertex $start\in W(G)$} 
            \State{Initialize hashmap $wedge\_count$ to all zeros}
            \State{Initialize an empty wedge set $nzw \leftarrow \{\phi\}$}
            \ForEach{vertex $mid\in N_{start}$}
                \ForEach{vertex $last\in N_{mid}$}
                    \If{$(last\geq mid)$ or $(last\geq start)$}{\ break}\EndIf
                    \State{$wedge\_count[last] \leftarrow wedge\_count[last] + 1$}
                    \State{$\ nzw \leftarrow nzw \cup (mid, last)$}
                \EndForEach
            \EndForEach
            \ForEach{$last$ such that $wedge\_count[last] > 1$} \Comment{\emph{\textbf{Per-vertex counting}}}
                \State{$bcnt \leftarrow {wedge\_count[last] \choose 2}$} 
                \State{$\bowtie_{start}\leftarrow\ \bowtie_{start} +\ bcnt$;\ \  $\bowtie_{last}\leftarrow\ \bowtie_{last} +\ bcnt$} 
                \ForEach{$(mid, last)\in nzw$} 
                    \State{$\bowtie_{mid}\leftarrow\ \bowtie_{mid} +\  (wedge\_count[last]-1)$}
                \EndForEach
            \EndForEach
            \ForEach{$(mid, last)\in nzw$ such that $wedge\_count[last] > 1$} \Comment{\emph{\textbf{Per-edge counting}}}
                \State{Let $e_1$ and $e_2$ denote edges $(start, mid)$ and $(mid, last)$, respectively}
                \State{$\bowtie_{e_1}\leftarrow\ \bowtie_{e_1} +\ (wedge\_count[last]-1)$;\ \  $\bowtie_{e_2}\leftarrow\ \bowtie_{e_2} +\ (wedge\_count[last]-1)$}
            \EndForEach            
            
        \EndParForEach
	\end{algorithmic}
\end{algorithm}

\subsection{Bipartite Graph Decomposition}\label{sec:bottomup}
Sariyuce et al.\cite{sariyucePeeling} introduced $k-$tips and $k-$wings as a butterfly dense vertex and edge-induced subgraphs, respectively. They
are formally defined as follows:

\begin{definition}\label{def:kwing}
A bipartite subgraph $H \subseteq G$, induced on edges $E(H)\subseteq E(G)$, is a \textbf{k-wing} iff
\begin{itemize}
    \itemsep0em
    \item each edge $e \in E(H)$ is contained in at least k butterflies,
    \item any two edges $(e_1,e_2)\in E'$ is connected by a series of butterflies,
    \item $H$ is maximal i.e. no other $k-$wing in $G$ subsumes $H$.
\end{itemize}
\end{definition}

\begin{definition}\label{def:ktip}
A bipartite subgraph $H\subseteq G$, induced on vertex sets $U(H)\subseteq U(G)$ and $V(H)=V(G)$, is a \textbf{k-tip} iff
\begin{itemize}
    \itemsep0em
    \item each vertex $u \in U(H)$ is contained in at least k butterflies,
    \item any two vertices $\{u,u'\}\in U(H)$ are connected by a series of butterflies,
    \item $H$ is maximal i.e. no other $k-$tip in $G$ subsumes $H$.
\end{itemize}
\end{definition}


Both $k-$wings and $k-$tips are hierarchical as a $k-$wing/$k-$tip
completely overlaps with a $k'-$wing/$k'-$tip for all $k'\leq k$. 
Therefore, instead of storing all $k-$wings, a \textit{wing number} 
$\theta_e$ of an edge $e$
is defined as the maximum $k$ for which 
$e$ is present in a $k-$wing. 
Similarly, \textit{tip number} $\theta_u$ of 
a vertex $u$ is the maximum $k$ for which 
$u$ is present in a $k-$tip.
Wing and tip numbers act as
a space-efficient indexing from which
any level of the $k-$wing and $k-$tip hierarchy, respectively, can be quickly 
retrieved~\cite{sariyucePeeling}. 
In this paper, we 
study the problem of finding wing and tip numbers, also known as 
\textit{wing and tip decomposition}, respectively.


Bottom-Up Peeling (\texttt{BUP}) is a commonly employed 
technique to compute wing 
decomposition~(alg.\ref{alg:bottomup}).
It initializes the support of each edge using per-edge butterfly counting~(alg.\ref{alg:bottomup}, line 1), 
and then iteratively peels the edges with 
minimum support until no edge remains. When an edge $e\in E$ is peeled, its support 
in that iteration is recorded as its wing number~(alg.\ref{alg:bottomup}, line 4).
Further, for every edge $e'$ that shares butterflies with $e$, 
the support $\bowtie_{e'}$ is decreased corresponding to the
removal of those butterflies. 
Thus, edges are peeled in a non-decreasing order of wing numbers. 

Bottom-up peeling for tip decomposition utilizes a similar
procedure for peeling vertices.
A \emph{crucial distinction} here is that in tip decomposition, vertices in only one
of the sets $U(G)$ or $V(G)$ are peeled as a $k-$tip consists of all vertices from 
the other set~(defn.\ref{def:ktip}). For clarity of description, we assume that $U(G)$ is 
the vertex set to peel.
As we will see later in sec.\ref{sec:tipPBNG}, this distinction renders the two-phased
approach of PBNG highly suitable for decomposition. 

Runtime of bottom-up peeling is dominated by wedge traversal required to find
butterflies that contain the entities being peeled (alg.\ref{alg:bottomup}, lines 7-9). 
The overall complexity for wing decomposition is $\mathcal{O}\left(\sum_{(u,v)\in E(G)}\sum_{v\in N_u}d_v\right) = \mathcal{O}\left(\sum_{u\in U(G)}\sum_{v\in N_u}d_ud_v\right)$. 
Relatively, tip decomposition has a lower complexity of $\mathcal{O}\left(\sum_{u\in U(G)}\sum_{v\in N_u}d_v\right) = \mathcal{O}\left(\sum_{v\in V(G)}d_v^2\right)$, which is 
still quadratic in vertex degrees and very high in absolute terms.

\begin{algorithm}[]
	\caption{Wing decomposition using bottom-up peeling (\texttt{BUP})}
	\label{alg:bottomup}
	\begin{algorithmic}[1]
	    \Statex{\textbf{Input:} Bipartite graph $G(W=(U, V), E)$} 
	    \Statex{\textbf{Output:} Wing numbers $\theta_e\ \forall\ e\in E(G)$}
        \State{$\bowtie_{E}\ \leftarrow$ \texttt{pveBcnt($G$)}}\Comment{\emph{Counting for support initialization (alg.\ref{alg:counting})}}
        \While{$E(G)\neq\ \{\phi \}$}\Comment{\emph{Peeling}}
            \State{$e\leftarrow \arg\min_{e\in E(G)}\{\bowtie_e \}$}
            \State{$\theta_e \leftarrow\  \bowtie_e,\ \ E(G)\leftarrow E(G)\setminus \{e\}$}
            \State{\texttt{\textsc{update}(}$e, \theta_e, \bowtie_{E}, G$\texttt{)}}    
        \EndWhile
        \Statex{}
        \Function{\texttt{update}}{$e=(u,v), \theta_e, \bowtie_{E}, G$}\label{func:update}
            \ForEach{$v'\in N_u \setminus \{v\}$}\Comment{\textit{Find butterflies}}
                \State{Let $e_1$ denote edge $(u, v')$}
                \ForEach{$u' \in N_{v'} \setminus \{u\}$ such that $(u', v)\in E(G)$} 
                    \State{Let $e_2$ and $e_3$ denote edges $(u', v)$ and $(u', v')$, respectively}
                    \State{$\bowtie_{e_1} \leftarrow \max\left(\theta_e,\ \bowtie_{e_1}-1\right)$;\ \  $\bowtie_{e_2} \leftarrow \max\left(\theta_e,\ \bowtie_{e_2}-1\right)$;\ \  $\bowtie_{e_3} \leftarrow \max\left(\theta_e,\ \bowtie_{e_3}-1\right)$}\Comment{\textit{Update support}}
                \EndForEach
            \EndForEach 
        \EndFunction
	\end{algorithmic}
\end{algorithm}

\subsection{Bloom-Edge-Index}\label{sec:beIndex}
Chiba and Nishizeki~\cite{chibaArboricity} proposed storing
wedges derived from the computational patterns of their butterfly counting algorithm, as a space-efficient
representation of all butterflies. Wang et al.\cite{wangBitruss}
used a similar representation termed 
\emph{Bloom-Edge-Index~(BE-Index)} for quick 
retrieval of butterflies containing peeled edges 
during wing decomposition. 
We extensively utilize BE-Index not just for computational efficiency, but also
for enabling parallelism in wing decomposition. In this subsection, we give a brief 
overview of some key concepts in this regard.

The butterfly counting algorithm
assigns priorities~(labels) to all vertices
in a decreasing order of their degree~(alg.\ref{alg:counting}, line 2). 
Based on these priorities, a structure called \emph{maximal priority $k-$bloom}, which is 
the basic building block of BE-Index, is
defined as follows~\cite{wangBitruss}:
\begin{definition}
A maximal priority $k-$bloom $B(W=(U, V), E)$ is a $(2,k)-$biclique 
(either $U(B)$ or $V(B)$ has exactly two vertices, each connected to all $k$ vertices in $V(B)$ or $U(B)$, respectively) that satisfies the following conditions:
\begin{enumerate}
    \item The highest \textbf{priority} vertex in $W(B)$ belongs to the 
    set ($U(B)$ or $V(B)$) which has exactly two vertices, and
    \item $B$ is \textbf{maximal} i.e. there exists no $(2,k)-$biclique $B'$ such that $B\subset B' \subseteq G$ and $B'$ satisfies condition $1$.
\end{enumerate}
\end{definition}

\paragraph{\underline{Maximal Priority Bloom Notations}:} The vertex set ($U(B)$ or $V(B)$) containing the 
highest priority vertex is called the \emph{dominant set} of $B$. Note that each vertex in the non-dominant 
set has exactly two incident edges in $E(B)$, that are said to be \emph{twins} of 
each other in bloom $B$. 
For example, in the graph $G'$ shown in fig.\ref{fig:beIndexDemo}, the subgraph induced on 
$\{v_0, v_1, u_0, u_1\}$ is a maximal priority $2-$bloom with $v_1\in V(B)$ as the 
highest priority vertex and twin edge pairs  $\{e_0, e_1\}$ and $\{e_2, e_3\}$.
The twin of an edge $e$ in bloom $B$ is denoted by $twin(e,B)$.
The cardinality of the non-dominant vertex set of bloom $B$ is called the \emph{bloom number} of $B$.
Wang et al.\cite{wangBitruss} further prove the following properties of maximal 
priority blooms:
\begin{property}\label{prop:bloomNum}
    A $k-$bloom $B(W=(U, V), E)$ consists of exactly 
    ${k\choose 2} = \frac{k*(k-1)}{2}$ butterflies. 
    Each edge $e\in E(B)$ is contained in exactly $k-1$
    butterflies in $B$.
    Further, edge $e$ shares all $k-1$ butterflies with $twin(e, B)$, and one butterfly each with all other edges $e'\in 
    E(B)\setminus\{twin(e, B)\}$.
\end{property}

\begin{property}\label{prop:unique}
    A butterfly in $G$ must be contained in exactly one maximal priority $k-$bloom.
\end{property}

    

Note that the butterflies containing an edge $e$, and the other edges in those butterflies, can be 
obtained by exploring all blooms that contain $e$. 
For quick access to blooms of an edge and vice-versa, BE-Index is defined as follows:

\begin{definition}
BE-Index of a graph $G(W=(U, V), E)$ is a bipartite graph $I(W=(U, V), E)$ that links
all maximal priority blooms in $G$ to the respective edges within the blooms.
\begin{itemize}
    \itemsep0em
    \item \textbf{W(I) --} Vertices in $U(I)$ and $V(I)$ uniquely 
    represent all maximal priority blooms in $G$ and edges in $E(G)$, 
    respectively. Each vertex $B\in U(I)$ also stores the bloom number 
    $k_B(I)$ of the corresponding bloom.

    \item \textbf{E(I) --} There exists an edge $(e, B)\in E(I)$ if and 
    only if the corresponding bloom $B\subseteq G$ contains the edge $e\in
    E(G)$. Each edge $(e, B)\in E(I)$
is labeled with $twin(e,B)$.
\end{itemize}
\end{definition}

\paragraph{\underline{BE-Index Notations}:} For ease of explanation, we refer to a maximal 
priority  bloom as simply bloom. We use the same notation $B$~(or $e$) to denote both a 
bloom~(or edge) and its representative vertex in BE-Index. Neighborhood of a 
vertex $B\in U(I)$ and $e\in V(I)$ is denoted by $N_B(I)$ and $N_e(I)$, respectively. The 
bloom number of $B$ in BE-Index $I$ is denoted by $k_B(I)$.
Note that $k_B(I) = \frac{\abs{N_B(I)}}{2}$.

\begin{figure}[htbp]
    \centering
\includegraphics[width=\linewidth]{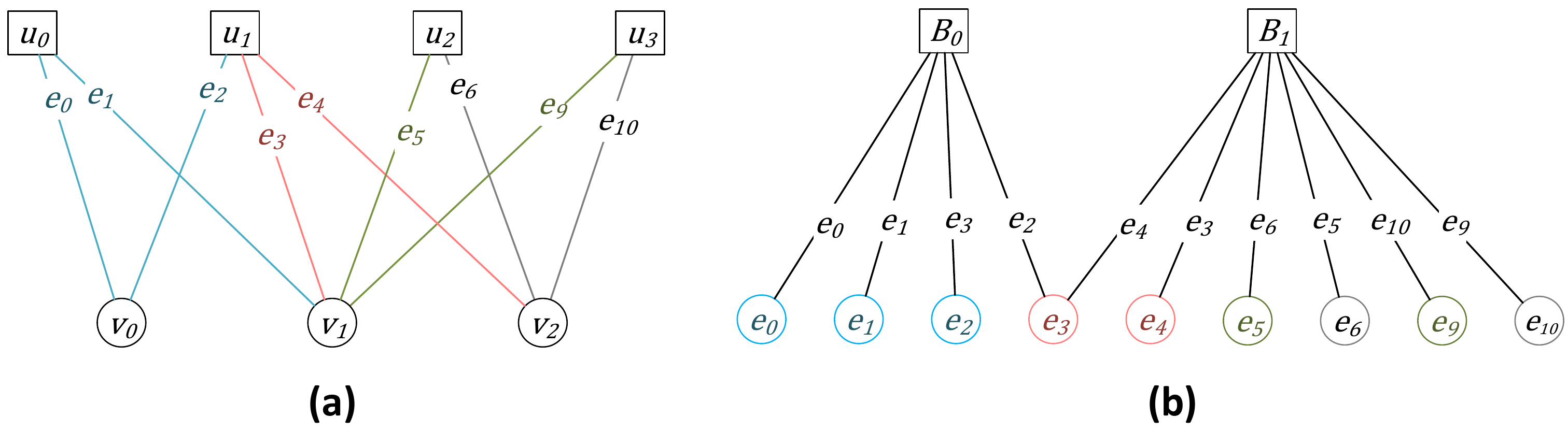}     
\caption{(a) Bipartite graph $G'(W=(U, V), E)$ (b) BE-Index of $G'$ with two maximal priority blooms.}
    \label{fig:beIndexDemo}
\end{figure}

Fig.\ref{fig:beIndexDemo} depicts a graph $G'$ (subgraph of $G$ from 
fig.\ref{fig:wingDemo})
and its BE-Index. $G'$ consists of two maximal priority blooms: 
(a) $B_0$ with dominant set $V(B_0)=\{v_0, v_1\}$ and 
$k_{B_0}(I)=2$, and (b) $B_1$ with dominant vertex set 
$V(B_1)=\{v_1, v_2\}$ and $k_{B_0}(I)=3$. 
As an example, edge $e_3$ is a part of $1$ butterfly in $B_0$ shared with 
twin $e_2$, and $2$ butterflies in $B_1$ shared with twin $e_4$. With all
other edges in $B_0$ and $B_1$, it shares one butterfly each. 

\paragraph{\underline{Construction of BE-Index}:} Index construction can be easily 
embedded within the counting procedure~(alg.\ref{alg:counting}). 
Each pair of endpoint vertices $\{start, last\}$ of wedges explored during 
counting, represents the dominating set of a bloom~(with $last$ as the highest priority vertex)
containing the edges $\{start, mid\}$ and $\{mid, last\}$ for all midpoints $mid$. 
Lastly, for a given vertex $mid$, edges $\{start, mid\}$ and 
$\{mid, last\}$ are twins of each other. Thus, the space
and computational complexities of BE-Index construction are bounded by the the wedges explored during counting which is $\mathcal{O}\left(\alpha \cdot m\right)$.

\paragraph{\underline{Wing Decomposition with BE-Index}:} Alg.\ref{alg:bePeel} depicts
the procedure to peel an edge $e$ using BE-Index $I$. Instead of traversing wedges in $G$ to find butterflies of $e$, 
edges that share butterflies with $e$ are found
by exploring 2-hop neighborhood of
$e$ in $I$~(alg.\ref{alg:bePeel}, line 7).
Number of butterflies shared with these edges in each bloom is also obtained analytically using property~\ref{prop:bloomNum}~(alg.\ref{alg:bePeel}, lines 4 and 8).
Remarkably, peeling an edge $e$ using alg.\ref{alg:bePeel} requires at most $\bowtie_{e}^G$ 
traversal in BE-Index~\cite{wangBitruss}.
Thus, it reduces the computational complexity of wing decomposition
to $\mathcal{O}\left(\sum_{e\in E(G)}\bowtie_e^G\right)$.
However, it is still proportional to the number of butterflies which can be
enormous for large graphs.

\begin{algorithm}[htbp]
	\caption{Support update during edge peeling, using BE-Index}
	\label{alg:bePeel}
	\begin{algorithmic}[1]
        \Function{\texttt{update}}{$e, \theta_e, \bowtie_{E}, \text{BE-Index}\ I(W=(U, V), E)$}\label{func:update}
            \ForEach{bloom $B\in N_e(I)$}
                \State{$e_t\leftarrow twin(e, B)$, $\ \ k_B(I) \leftarrow$ bloom number of $B$ in $I$}
                \State{$\bowtie_{e_t}\ \leftarrow\ \max\left(\theta_e,\ \bowtie_{e_t} -\ (k_B(I)-1)\right)$}
                \State{$E(I) \leftarrow E(I) \setminus \{(e, B), (e_t, B)\}$}
                \State{$k_B(I) \leftarrow k_B(I)-1$}\Comment{\textit{Update bloom number}}
                \ForEach{$e' \in N_{B}(I)$} 
                    \State{$\bowtie_{e'} \leftarrow \max\{\theta_e,\ \bowtie_{e'}-1\}$}\Comment{\textit{Update support}}
                \EndForEach
            \EndForEach 
        \EndFunction
	\end{algorithmic}
\end{algorithm}


\subsection{Challenges}\label{sec:challenges}
Bipartite graph decomposition is computationally very expensive and 
parallel computing is widely used to accelerate such workloads. However, 
state-of-the-art parallel framework P\textsc{ar}B\textsc{utterfly}~\cite{shiParbutterfly, julienne} 
is based on bottom-up peeling and only utilizes parallelism within each peeling iteration. 
This restricted parallelism is due to the following sequential 
dependency between iterations -- \textit{support updates in an 
iteration guide the choice of entities to peel in the subsequent iterations.} 
Hence, even though P\textsc{ar}B\textsc{utterfly} is work-efficient~\cite{shiParbutterfly}, its scalability is 
limited because:

\begin{enumerate}[leftmargin=*]
    \item It incurs large number of iterations and low parallel workload per iteration. Due to the resulting synchronization and load imbalance, intra-iteration parallelism is insufficient for substantial acceleration.\\
    \textbf{Objective 1} is therefore, to design a parallelism 
    aware peeling methodology for bipartite graphs that reduces synchronization and exposes large amount of parallel workload.

    \item It traverses an enormous amount of wedges~(or 
    bloom-edge links in BE-Index) to
    retrieve butterflies removed by peeling. This is 
    computationally expensive and can be infeasible on large datasets, 
    even for a parallel algorithm.\\
    \textbf{Objective 2} is therefore, to reduce the amount of traversal in practice. 
\end{enumerate}






\section{Parallel Bipartite Network peelinG (PBNG)}\label{sec:PBNG}
In this section, we describe a generic parallelism friendly two-phased peeling approach for bipartite 
graph decomposition~(targeting objective~$1$, sec.\ref{sec:challenges}). We further demonstrate how this approach
is adopted individually for tip and wing decomposition in our Parallel Bipartite Network peelinG~(PBNG) 
framework.
\subsection{Two-phased Peeling}\label{sec:twoPhase}
\begin{figure}[htbp]
    \centering
\includegraphics[width=0.92\linewidth]{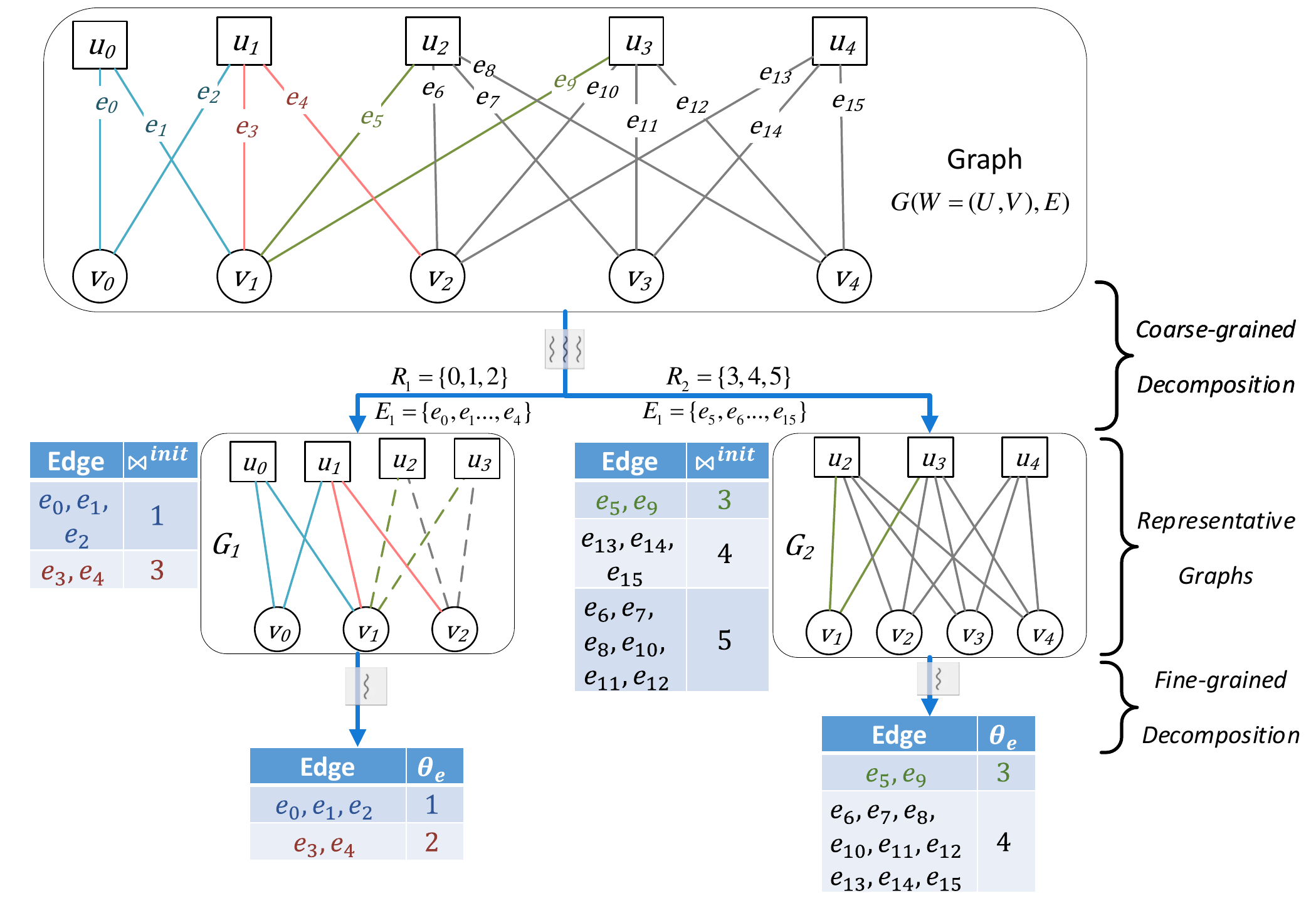}     
\caption{Graphical illustration of PBNG's two-phased peeling for wing decomposition of the graph $G$ from fig.\ref{fig:wingDemo}. The coarse-grained decomposition divides $E(G)$ into $P=2$ partitions using 
parallel peeling iterations. The fine-grained decomposition peels each partition using a single thread but concurrently processes multiple partitions.}
    \label{fig:example}
\end{figure}

The fundamental observation underlining our approach is that entity number $\theta_l$ for 
an entity $l$ only depends on the number of butterflies shared between $l$ and other entities
with entity numbers \textit{no less than} $\theta_l$. 
Therefore, given a graph $G(W=(U, V), E)$ and per-entity butterfly counts in $G$~(obtained from counting), 
only the \textit{cumulative} effect of peeling all entities with entity
number strictly smaller than $\theta_l$,
is relevant for computing $\theta_l-$level~($\theta_l-$tip or $\theta_l-$wing) in the decomposition hierarchy.
Due to commutativity of addition, the \emph{order} of peeling these entities has \emph{no impact} on
$\theta_l-$level.

This insight allows us to eliminate the constraint of deleting only 
minimum support entities in each iteration, which bottlenecks the available parallelism. 
To find $\theta_l-$level, all entities with entity number less than $\theta_l$ can be 
peeled concurrently, providing sufficient parallel workload. However, for 
every possible $\theta_l$, peeling all entities with smaller entity number will be computationally very inefficient.
To avoid this inefficiency, we develop a novel two-phased \looseness=-1approach.

\subsubsection{Coarse-grained Decomposition}\label{sec:coarse}
The first phase divides the spectrum of all possible entity 
numbers $[0, \theta^{max}]$ into $P$ smaller non-overlapping
ranges ${R_1, R_2\dots R_P}$, where $\theta^{max}$
is the maximum entity number in $G$, and $P$ is a 
user-specified parameter. A range $R_i$ represents a 
set of entity numbers $[\theta(i), \theta(i+1))$, 
such that $R_i \cap R_j = \{\phi\}$ for all $j\neq i$.
These ranges are computed using a heuristic described
in sec.\ref{sec:adaptive}.
Corresponding to each range $R_i$, PBNG also computes the partition $L_i$ comprising all entities
whose entity numbers lie in $R_i$. 
Thus, instead of finding the exact entity number 
$\theta_l$ of an entity $l$, the first phase of PBNG computes \textit{bounds} on $\theta_l$. 
Therefore, we refer to this phase as \emph{Coarse-grained Decomposition (PBNG CD)}. 
The absence of overlap between the ranges allows each subset to be peeled independently of others 
in the second phase, for exact entity number \looseness=-1computation.

Entity partitions are computed by iteratively peeling 
entities whose support lie in the minimum range~(alg.\ref{alg:cd},lines 5-13). 
For each partition, the first peeling iteration in PBNG CD scans all entities to find the peeling set, denoted
as $activeSet$~(alg.\ref{alg:cd}, line 9). In subsequent iterations,
$activeSet$ is computed jointly with support updates. Thus, unlike bottom-up peeling, 
PBNG CD does not require a priority queue data structure which makes support updates 
relatively \looseness=-1cheaper. 

PBNG CD can be visualized as a generalization of bottom-up peeling~(alg.\ref{alg:bottomup}). In each iteration, the latter peels
entities with minimum support~($\abs{R_i}=1$ for all $i$), whereas 
PBNG CD peels entities with support in a 
broad \textit{custom range}~($\abs{R_i}\geq1$). 
For example, in fig.\ref{fig:example}, 
PBNG CD divides edges $E(G)$
into two partitions corresponding to ranges $R_1=[0, 2]$ and $R_2=[3, 5]$, whereas bottom-up peeling 
will create $4$ partitions corresponding to every individual level in the decomposition 
hierarchy~($\theta=\{1, 2, 3, 4\}$). Setting $P\ll \theta^{max}$ ensures a large number of 
entities peeled per iteration~(\textit{sufficient parallel workload}) and significantly fewer 
iterations~(\textit{dramatically less synchronization}) compared to bottom-up \looseness=-1peeling.

In addition to the ranges and partitions, PBNG CD also computes a support initialization 
vector $\bowtie^{init}$. For an entity $l\in L_i$, 
$\bowtie^{init}_l$ is the number of butterflies that $l$  
shares \emph{only} with entities in partitions $L_j$ such that $j\geq i$. In other words, it represents
the aggregate effect of peeling entities with entity number in 
ranges lower than $R_i$. 
During iteative peeling in PBNG CD, this number is inherently 
generated after the last peeling iteration of $R_{i-1}$ and copied
into $\bowtie^{init}(l)$~(alg.\ref{alg:cd}, lines 6-7). 
For example, in fig.\ref{fig:example}, support of $e_5$ after peeling $E_1=\{e_0, e_1, e_2, e_3, e_4\}$ is $3$, which is recorded in $\bowtie^{init}_{e_5}$.

\begin{algorithm}[htbp]
	\caption{PBNG Coarse-grained Decomposition (PBNG CD) for wing decomposition}
	\label{alg:cd}
	\begin{algorithmic}[1]
	    \Statex{\textbf{Input:} Bipartite graph $G(W=(U, V), E)$, \# partitions $P$} 
	    \Statex{\textbf{Output:} Ranges $\{\theta(1), \theta(2)\dots \theta(P+1)\}$, Edge Partitions $\{E_1, E_2\dots E_P\}$, Support initialization vector $\bowtie^{init}_E$}
        \State{$E_i \leftarrow \{\phi\}\ \forall\ i\in \{1,2\dots P\}$}
        \State{Initial support $\{\bowtie_E\}\leftarrow$ \texttt{pveBcnt($G$)}}\Comment{Ref: alg.\ref{alg:counting}}
        \State{$I(W=(U, V), E)\leftarrow$ BE-Index of $G$}
        \State{$\theta(1)\leftarrow0$,  $\ i\leftarrow 1$, $\ tgt\leftarrow$ target butterflies (workload) per partition}
        \While{$\left(E(G)\neq\ \{\phi \}\right)$ and $\left(i\leq P\right)$}
            \ParForEach{$e\in E(G)$}\Comment{\textit{Support Initialization Vector}}
                \State{$\bowtie^{init}_e \leftarrow\ \bowtie_e$}
            \EndParForEach
            \State{$\theta(i+1)\leftarrow\text{\texttt{\textsc{find\_range}(}}E(G),\ \bowtie_{E},\ tgt\text{\texttt{)}}$}\Comment{\textit{Upper Bound}}
            \State{$activeSet\leftarrow$ all edges $e\in E(G)$ such that $\ \theta(i) \leq\ \bowtie_e\ < \theta(i+1)$}
            \While{$activeSet\neq \{\phi \}$}\Comment{\textit{Peel edges}}
                \State{$E_i \leftarrow E_i\cup activeSet$, $\ E(G) \leftarrow E(G)\setminus activeSet$}
                \State{\texttt{\textsc{parallel\_update(}$activeSet,\ \theta(i),\ \bowtie_E,\  I$\texttt{)}}} 
                \State{$activeSet\leftarrow$ all edges $e\in E(G)$ such that $\ \theta(i) \leq\ \bowtie_e\ < \theta(i+1)$}
            \EndWhile
            \State{$i\leftarrow i+1$}
        \EndWhile
        \Statex{}
        \Function{\texttt{find\_range}}{$E(G),\ \bowtie_E,\ tgt $}\label{func:findhi}
            \State{Initialize hashmap $work$ to all zeros}
            \ForEach{$\theta\ \in\  \bowtie_E$} 
                \State{$work[\theta] \leftarrow \sum_{e\in E(G)}\left(\bowtie_e\cdot \mathds{1}(\bowtie_e \leq\ \theta)\right)$}
            \EndForEach
            \State{$\theta_{ub}\leftarrow$ $\arg\min\left(\theta\right)$ such that $work[\theta] \geq tgt$}
            \State{return $\theta_{ub}+1$}
            
        \EndFunction
        \Statex{}
        \Function{\texttt{parallel\_update}}{$activeSet,\ \theta_e,\ \bowtie_{E},\ \text{BE-Index}\ I(W=(U, V), E)$}\label{func:update}
            \State{Initialize hashmap $count$ to all zeros}
            \ParForEach{edge $e\in activeSet$}
                \ForEach{bloom $B\in N_e(I)$}\Comment{\textit{Update support atomically}}
                    \State{$e_t\leftarrow twin(e, B)$, $\ \ k_B(I) \leftarrow$ bloom number of $B$ in $I$}
                    \If{$\left(e_t\notin activeSet\right)$\ or\ $\left(edgeID(e_t) < edgeID(e)\right)$}
                        \State{$\bowtie_{e_t}\ \leftarrow\ \max\left(\theta_e,\ \bowtie_{e_t} -\ (k_B(I)-1)\right)$}
                        \State{$E(I) \leftarrow E(I) \setminus \{(e, B), (e_t, B)\}$}
                        \State{$count[B] \leftarrow count[B] + 1$}
                        \ForEach{$e' \in N_{B}(I)\ $ such that $\ twin(e', B)\notin activeSet$}
                            \State{$\bowtie_{e'} \leftarrow \max\{\theta_e,\ \bowtie_{e'}-1\}$}
                        \EndForEach
                    \EndIf
                \EndForEach 
            \EndParForEach
            \ParForEach{$B\in U(I)$ such that $count[B]>0$}
                \State{$k_B(I) \leftarrow k_B(I) - count[B]$}\Comment{\textit{Update bloom number}}
            \EndParForEach
        \EndFunction
	\end{algorithmic}
\end{algorithm}

\subsubsection{Fine-grained Decomposition}\label{sec:fine}
The second phase computes exact entity numbers and is called \emph{Fine-grained Decomposition (PBNG FD)}.
The key idea behind PBNG FD is that if we have the knowledge of all butterflies that each entity 
$l\in L_i$ shares \emph{only} with entities in partitions $L_j$ such that $j\geq i$, $L_i$ 
can be peeled independently of all other partitions.
The $\bowtie^{init}$ vector computed in PBNG CD precisely indicates the count of such 
butterflies~(sec.\ref{sec:coarse}) and hence, is used to \emph{initialize support values} in PBNG FD.
PBNG FD exploits the resulting independence among partitions
to concurrently process multiple partitions using 
sequential bottom up peeling. 
Setting $P\gg T$ ensures that PBNG FD can be efficiently 
parallelized across partitions on $T$ threads. 
Overall, both phases in PBNG circumvent strict sequential dependencies between different 
levels of decomposition hierarchy to efficiently parallelize the peeling process.

The two-phased approach can potentially double the 
computation required for peeling. However, we note that 
since partitions are peeled independently in PBNG FD, 
support updates are not communicated across the partitions. 
Therefore, to improve computational efficiency, 
PBNG FD operates on a smaller representative subgraph $G_i\subseteq G$
for each partitions $L_i$.
Specifically, $G_i$ preserves a butterfly $\bowtie$ iff
it satisfies both of the following \looseness=-1conditions:
\begin{enumerate}[leftmargin=*]
    \item $\bowtie$ contains multiple entities within $L_i$.
    \item $\bowtie$ only contains entities from partitions
    $L_j$ such that $j\geq i$. If $\bowtie$
    contains an entity from lower ranged partitions, then it does not exist in $\theta(i)$-level 
    of decomposition hierarchy~(lowest entity number in $L_i$). Moreover, the impact of removing $\bowtie$ on 
    the support of entities in $L_i$, is already accounted for in $\bowtie^{init}$~(sec.\ref{sec:coarse}).
\end{enumerate}
For example, in fig.\ref{fig:example}, $G_1$ contains the butterfly 
$\bowtie\ =\left(u_1, v_1, u_2, v_2\right)$ because (a) it contains multiple edges $\{e_3, e_4\}\in L_1$ and satisfies condition $1$, and (b) all edges in $\bowtie$ are from $L_1$ or $L_2$ and hence, 
it satisfies condition $2$. However, $G_2$ does not contain this butterfly because two if its edges 
are in $L_1$ and hence, it does not satisfy condition $2$ for $G_2$.

\subsubsection{Range Partitioning}\label{sec:adaptive}
In PBNG CD, the first step for computing a partition $L_i$ is to find the range 
$R_i=[\theta(i), \theta(i+1))$~(alg.\ref{alg:cd}, line 8). 
For load balancing, $\theta(i+1)$ should be computed\footnote{$\theta(i)$ is directly
obtained from upper bound of previous range $R_{i-1}$.} 
such that the all partitions $L_i$ pose uniform workload in PBNG FD.
However, the representative subgraphs and the corresponding workloads are not known prior to 
actual partitioning. 
Furthermore, exact entity numbers are not known either and hence, we cannot 
determine beforehand, exactly which entities will lie in $L_i$ for different values of $\theta(i+1)$. 
Considering these challenges, PBNG uses two \emph{proxies} for range determination: 
\begin{enumerate}[leftmargin=*]
    \item\emph{Proxy 1 $\rightarrow$} current support $\bowtie_l$ of an entity $l$ is used as a proxy for its entity number.
    \item \emph{Proxy 2 $\rightarrow$} complexity of peeling individual entities in $G$ is 
    used as a proxy to estimate peeling workload in representative subgraphs.
\end{enumerate}

Now, the problem is to compute $\theta(i+1)$ such that \emph{estimated} workload of $L_i$ as 
per proxies, is close to the average workload per partition denoted as $tgt$.
To this purpose, PBNG CD creates a bin for each support value, 
and computes the aggregate workload of entities in that bin.
For a given $\theta(i+1)$, estimated workload of peeling $L_i$ is the sum of workload of all bins corresponding to support less than\footnote{All entities with entity numbers less than $\theta(i)$ are already peeled before PBNG CD computes $R_i$.} $\theta(i+1)$. 
Thus, the workload of $L_i$ as a function of $\theta(i+1)$ can be 
computed by a prefix scan of individual bin workloads~(alg.\ref{alg:cd}, lines 17-18). 
Using this function, the upper bound is chosen such that the 
estimated workload of $L_i$ is close to but no less than $tgt$~(alg.\ref{alg:cd}, line 19).

\paragraph{Adaptive Range Computation:} Since range determination uses \emph{current support} 
as a proxy for entity numbers, the target workload for each partition $L_i$ is covered by the entities added to $L_i$ in its very first peeling iteration in PBNG CD. 
After the support updates in this iteration, more entities may be
added to $L_i$ and final workload estimate of $L_i$ may significantly exceed $tgt$. 
This can result in significant load imbalance among the partitions and
potentially, PBNG CD could finish in much fewer than $P$ partitions. To avoid this scenario, we implement the following two-way adaptive range determination:
\begin{enumerate}[leftmargin=*]
    \itemsep0em
    \item Instead of statically computing an average target, we 
    dynamically update $tgt$ for every partition based on the remaining workload and the number of partitions to create. If a partition gets too much workload, the 
    target for subsequent partitions is automatically reduced, thereby preventing a situation where 
    all entities get peeled in $\ll P$ \looseness=-1partitions.
    \item A partition $L_i$ likely covers many more entities
    than the initial estimate based on proxy 1. 
    The second adaptation scales down the dynamic target
    for $L_{i}$ in an attempt to bring the actual workload
    close to the intended value. It assumes predictive local
    behavior i.e. $L_i$ will overshoot the target similar
    to $L_{i-1}$. Therefore, the scaling factor is computed
    as the ratio of initial workload estimate of $L_{i-1}$ 
    during $\theta(i)$ computation, and final estimate based 
    on all entities in $L_{i-1}$.
\end{enumerate}

\subsubsection{Partition scheduling in PBNG FD}\label{sec:schedule}
While adaptive range determination(sec.\ref{sec:adaptive}) tries to create partitions with
uniform estimated workload, the actual workload per partition in PBNG FD depends on the 
the representative subgraphs $G_i$ and can still have significant variance. Therefore, 
to \textit{improve load balance} across threads, we use scheduling strategies inspired from Longest 
Processing Time~(LPT) scheduling rule which is a well known $\frac{4}{3}$-approximation algorithm \cite{graham1969bounds}. 
We use the workload of $L_i$ as an indicator of its execution time in the following runtime scheduling 
mechanism:
\begin{itemize}[leftmargin=*]
    \item \textit{Dynamic task allocation$\rightarrow$} 
    All partition IDs are inserted in a task queue. 
    When a thread becomes idle, it pops a unique ID from the queue and processes the corresponding partition. Thus, all 
    threads are busy until every partition is \looseness=-1scheduled.
    \item \textit{Workload-aware Scheduling$\rightarrow$} Partition IDs in the task queue are sorted in a
    decreasing order of their workload. Thus, partitions with highest workload get scheduled first 
    and the threads processing them naturally receive fewer tasks in the future.
    Fig.\ref{fig:schedule} shows how workload-aware scheduling can improve the efficiency of dynamic 
    allocation.
\end{itemize}

\begin{figure}[htbp]
    \centering
\includegraphics[width=0.75\linewidth]{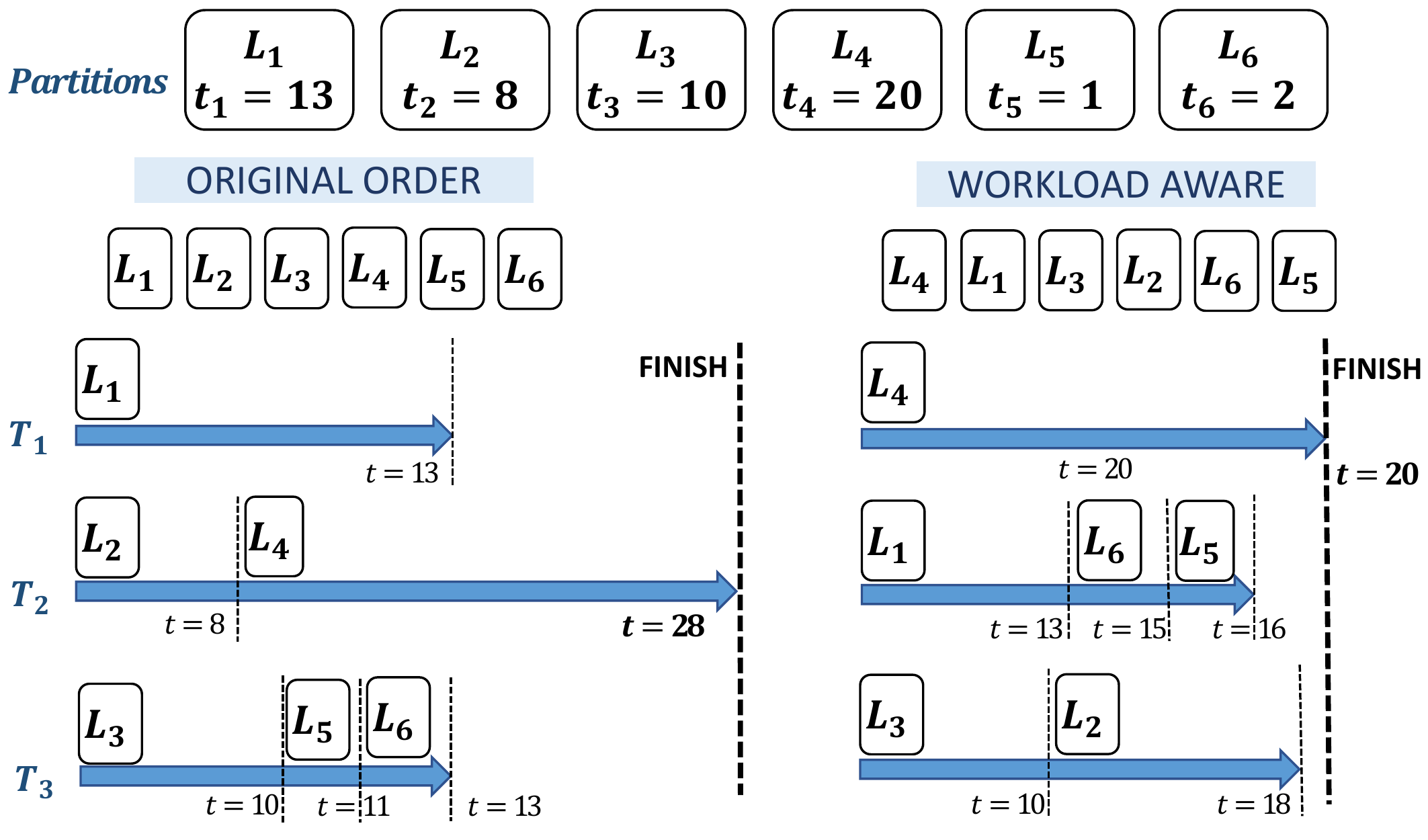}     
\caption{Benefits of Workload-aware Scheduling (WaS) in a $3$-thread ($T_1,T_2$ and $T_3$) system. Top row shows entity partitions with estimated time to peel them in PBNG FD. Dynamic allocation without WaS finishes in $28$ units of time compared to $20$ units with \looseness=-1WaS.}
    \label{fig:schedule}
\end{figure}



\subsection{Tip Decomposition}\label{sec:tipPBNG}
In this section, we give a brief overview of PBNG's two-phased peeling~(sec.\ref{sec:twoPhase}) applied 
for tip decomposition. A detailed description of the same 
is provided in our previous work~\cite{lakhotia2020receipt}.

For tip decomposition, PBNG CD divides the vertex set $U(G)$ into $P$ partitions -- $\{U_1, U_2, \dots, U_P\}$. 
Peeling a vertex $u\in U(G)$ requires traversal of all wedges
with $u$ as one of the endpoints. Therefore, range 
determination in PBNG CD uses wedge count of vertices in 
$G$, as a proxy to estimate the workload of peeling each partition $U_i$. Moreover, since only one of the vertex sets
of $G$ is peeled, at most two vertices of a butterfly can be a part of the 
peeling set~($activeSet$). Hence, support updates to a vertex $u'$ from 
different vertices in $activeSet$ correspond to disjoint butterflies in $G$.
The net update to support $\bowtie_{u'}$ can be simply computed by atomically aggregating the 
updates from individual vertices in \looseness=-1$activeSet$.

PBNG FD also utilizes the fact that any butterfly $\bowtie$ contains at most two vertices $\{u,u'\}$ 
in the vertex set $U(G)$ being peeled~(sec.\ref{sec:bottomup}). 
If $u\in U_i$  and $u'\in U_j$, the two 
conditions for preserving $\bowtie$ in either representative graphs $G_i$ or $G_j$
are satisfied only when $i = j$~(sec.\ref{sec:fine}). 
Based on this insight, we construct $G_i$ as the \emph{subgraph induced} on vertices $W_i=(U_i, V(G))$. 
Clearly, $G_i$ preserves every butterfly $\left(u, v, u', v'\right)$ where $\{u, u'\}\in U_i$.
For task scheduling in PBNG FD~(sec.\ref{sec:schedule}), we use the total wedges in $G_i$ with endpoints in $U_i$ as an indicator of the workload of peeling $U_i$. 

Given the bipartite nature of graph $G$, any edge $(u, v)\in E(G)$ exists in exactly one of the 
subgraphs $G_i$ and thus, the collective space requirement of all induced subgraphs is bounded by
$O(m)$. Moreover, by the design of representative (induced) subgraphs, PBNG FD for tip decomposition 
traverses only those wedges for which both the endpoints are in the same partition. This dramatically reduces the amount 
of work done in PBNG FD compared to bottom-up peeling and PBNG CD.
Note that we do not use BE-Index for tip decomposition due to the following reasons:
\begin{itemize}[leftmargin=*]
    \item Butterflies between two vertices are quadratic in the number of wedges between them, and
    wedge traversal (not butterfly count) determines the work done in tip decomposition. 
    Since BE-Index facilitates per-edge butterfly retrieval, peeling a vertex using BE-Index will
    require processing each of its edge individually and can result in increased computation if $\sum_{v\in V(G)}d_v^2 \ll \sum_{e\in E(G)}\bowtie_e^G$~(sec.\ref{sec:beIndex}).
    
    \item BE-Index has a \emph{high space complexity} of $\mathcal{O}\left(\sum_{(u,v)\in E(G)}\min{\left(d_v, 
    d_u\right)}\right)$ compared to just $\mathcal{O}\left(m\right)$ space needed to store $G$ and all 
    induced subgraphs $G_i$. This can make BE-Index based peeling infeasible even on machines
    with large amount of main memory. For example, BE-Index 
    of a user-group dataset \emph{Orkut}~($327$ million edges) has $26$ billion blooms, $150$ billion bloom-edge
    links and consumes $2.6$ TB memory.
\end{itemize}



\subsection{Wing Decomposition}
\subsubsection{Challenges}
Each butterfly consists of $4$ edges in $E(G)$ which 
is the entity set to decompose in wing decomposition. This is unlike tip decomposition
where each butterfly has only $2$ vertices from the decomposition set $U(G)$, and
results in the following issues:
\begin{enumerate}[leftmargin=*]
    
    \item When a butterfly $\bowtie$ is removed due to 
    peeling, the support of unpeeled edge(s) in $\bowtie$ should be reduced by exactly $1$ corresponding to this
    removal. 
    However, when multiple~(but not all) edges in $\bowtie$
    are concurrently peeled in the same iteration of PBNG CD, multiple updates with aggregate value $>1$ may be generated to unpeeled edges in $\bowtie$.

    \item It is possible that a butterfly $\bowtie$ 
    contains multiple but not all edges from a partition. 
    Thus, $\bowtie$ may need to be preserved in the representative subgraph of a partition, but will not be present in its edge-induced \looseness=-1subgraph. 
\end{enumerate}
Due to these reasons, a trivial extension of tip 
decomposition algorithm~(sec.\ref{sec:tipPBNG}) is not suitable for wing decomposition. 
In this section, we explore novel BE-Index based strategies to enable
two-phased peeling for wing decomposition.

\subsubsection{PBNG CD}
This phase divides the edges $E(G)$ into $P$ partitions -- $\{E_1, E_2, \dots, E_P\}$, as shown in alg.\ref{alg:cd}. 
Not only do we utilize BE-Index for computationally efficient support update computation in PBNG CD, we also utilize it to \emph{avoid conflicts} in parallel peeling iterations of PBNG CD. 
Since a butterfly is contained in exactly one maximal priority bloom~(sec.\ref{sec:beIndex}, property~\ref{prop:unique}), 
correctness of support updates within each bloom $B$ implies overall correctness of support updates 
in an iteration. To avoid conflicts, we therefore employ the following \emph{resolution
mechanism} for each bloom $B$:
\begin{enumerate}[leftmargin=*]
    \item If an edge $e$ and its twin $e'=twin(e, B)$ are both peeled in the same iteration, 
    then only the edge with highest index among $e$ and $e'$ updates (a)~the support of other edges in 
    $B$, and (b)~the bloom number $k_B(I)$~(alg.\ref{alg:cd}, lines 26-31). This is because all 
    butterflies in $B$ that contain $e$ also contain $e'$~(sec.\ref{sec:beIndex}, property\ref{prop:bloomNum}).
    
    \item If in an iteration, an edge $e\in activeSet$ but $e'=twin(e, B)\notin activeSet$, then the support $\bowtie_{e'}$ 
    is decreased by exactly ${k_B(I)-1}$ when $e$ is peeled. Other edges in $activeSet$
    do not propagate any updates to $\bowtie_{e'}$ via bloom $B$~(alg.\ref{alg:cd}, 
    lines 26-30). 
    This is because $e'$ is contained in exactly ${k_B(I)-1}$ butterflies in $B$, all of which are removed 
    when $e$ is peeled.
    To ensure that $k_B(I)$ correctly represents the butterflies shared between twin edges, support 
    updates from all peeled edges are computed prior to updating $k_B(I)$.
\end{enumerate}

Peeling an edge $e\in E(G)$ requires $\mathcal{O}\left(\bowtie_e\right)$ 
traversal in the BE-Index. 
Therefore, range determination in PBNG CD uses 
edge support as a proxy to estimate the workload of 
peeling each partition $E_i$. 

\begin{algorithm}[htbp]
	\caption{PBNG Fine-grained Decomposition (PBNG FD) for wing decomposition}
	\label{alg:fd}
	\begin{algorithmic}[1]
	    \Statex{\textbf{Input:} Graph $G(W=(U, V), E)$, BE-Index $I(W=(U, V), E)$, edge partitions $\{E_1, \dots, E_P\}$}
	    \Statex{\ \ \ \ \ \ \ \ \ \ Support initialization vector $\bowtie^{init}_E$}
	    \Statex{\textbf{Output:} Wing number $\theta_e$ for all $e\in E(G)$}
        
        \State{$\bowtie_{e}\  \leftarrow\  \bowtie^{init}_{e}$ for all $e\in E_i$}       \Comment{\textit{Initialize Support}}
        \State{$\{I_1, I_2, \dots, I_P\}\leftarrow$ \texttt{partition\_BE\_Index($G, I$)}}\Comment{\textit{BE-Indices for all partitions }}
        \State{Insert integers $\{1, 2, \dots, P\}$ in queue $Q$}
        \State{Compute $work[i] \leftarrow \sum_{e\in E_i}{\bowtie_e^{init}\ }$ for all $i\in Q$, and sort $Q$ by $work$}\Comment{\textit{LPT Scheduling}}
        \ParFor{$thread\_id \in \{1,2\dots T\}$}
            \While{$Q$ is not empty}
                \State{Atomically pop integer $i$ from $Q$}\Comment{\textit{Dynamic Task Allocation}}
                \While{$E_i \neq \{\phi\}$}\Comment{\textit{Peel partition}}
                    \State{$e \leftarrow \arg\min_{e\in E_i}\{\bowtie_e \}$}    \State{$\theta_e\leftarrow\ \bowtie_e$, $\ E_i\leftarrow E_i\setminus \{e\}$}
                    \State{\texttt{\textsc{update(}}$e,\theta_e,\bowtie_{E_i}, I_i$\texttt{)}}\Comment{\textit{Ref: alg.\ref{alg:bePeel}}}
                \EndWhile
            \EndWhile
        \EndParFor
        \Statex{}
        \Function{\texttt{partition\_BE\_Index}}{$G(W=(U, V), E), I(W=(U, V), E)$}\Comment{\textit{Compute partitions' BE-Indices}}
            \State{Let $p(e)$ denote the partition index of an edge $e$ i.e. $e\in E_{p(e)}$}
            \State{$V(I_i) \leftarrow E_i,\ U(I_i)\leftarrow \{\phi\},\ E(I_i) \leftarrow \{\phi\}\ \forall\ i\in \{1,2\dots P\}$} 
            \ParForEach{$B\in U(I)$}
                \State{Initialize hash map $wedge\_count$ to all zeros}
                \ForEach{$e\in N_B(I)$}
                    \State{$e_t \leftarrow twin(e, B)$}
                    \If{$p(e_t) \geq p(e)$}
                        \State{$U(I_i) \leftarrow U(I_i) \cup \{B\},\ \ E(I_i) \leftarrow E(I_i) \cup \left(e, B\right)$} \Comment{\textit{Add bloom-edge links}} 
                        \If{$\left(p(e_t)) > p(e)\right)$ or $\left(edgeID(e_t) < edgeID(e)\right) $}\Comment{\textit{Initialize bloom numbers}}
                            \State{$wedge\_count(I_i)\leftarrow wedge\_count(I_i) + 1$}
                        \EndIf
                    \EndIf
                \EndForEach
                \ForEach{partition index $i$ such that $B\in U(I_i)$}
                    \State{$k_B(I_i)\leftarrow \sum_{j \geq i}{wedge\_count(I_j)}$}\Comment{\textit{Compute bloom numbers~(prefix scan)}}
                \EndForEach
            \EndParForEach
            \State{return $\{I_1, I_2, \dots, I_P\}$}
            
        \EndFunction
	\end{algorithmic}
\end{algorithm}

\subsubsection{PBNG FD}
The first step for peeling a partition $E_i$ in PBNG FD, is to construct the corresponding BE-Index $I_i$
for its representative subgraph $G_i$. One way to do so is to compute $G_i$ and then 
use the index construction algorithm~(sec.\ref{sec:beIndex}) to construct $I_i$. However, this approach 
has the following drawbacks:
\begin{itemize}[leftmargin=*]
    \item Computing $G_i$ requires mining all edges in $E(G)$ that share butterflies with edges $e\in 
    E_i$, which can be \emph{computationally expensive}. 
    Additionally, the overhead of index construction even for a few hundred partitions can be significantly \looseness=-1large.
    \item Any edge $e\in E_i$ can potentially exist in all subgraphs $G_j$ such that $j\leq i$. 
    Therefore, creating and storing all subgraphs $G_i$ requires $\mathcal{O}(mP)$ \emph{memory space}.
\end{itemize}

To avoid these drawbacks, we directly compute $I_i$ for each $E_i$ by
partitioning the BE-Index $I$ of the original graph $G$~(alg.\ref{alg:fd}, lines 12-25).
Our partitioning mechanism ensures that all butterflies satisfying the two preservation conditions~(sec.\ref{sec:fine}) for a partition $E_i$, are represented in its BE-Index $I_i$.

Firstly, for an edge $e\in E_i$, its link $(e, B)$ with a bloom $B$ is preserved in $I_i$ if and only if
the twin $e_t=twin(e,B)\in E_j$ such that $j\geq i$~(alg.\ref{alg:fd}, lines 19-20). 
Since all butterflies in $B$ that contain $e$ also contain $e_t$~(sec.\ref{sec:beIndex}, property~\ref{prop:bloomNum}), 
none of them need to be preserved in $I_i$ if $j<i$. Moreover, if $\{e, e_t\}\in E_i$, their 
contribution to the bloom number $k_B(I_i)$ is counted only once~(alg.\ref{alg:fd}, lines \looseness=-121-22).

Secondly, for a space-efficient representation, $I_i$ does not store a link $\left(e', B\right)$
if $e'\notin E_i$. 
However, while such an edge $e'$ will not participate in peeling of $E_i$, it may be a part of a butterfly 
in $B$ that satisfies both preservation conditions for $I_i$~(sec.\ref{sec:fine}). 
For example, fig.\ref{fig:beIndexDemo} shows the representative subgraph $G_1$ and BE-Index for the 
partition $E_1$ generated by PBNG CD in fig.\ref{fig:example}. 
For space efficiency, we do not store the links $\{(e_5, B_1), (e_6, B_1), (e_9, B_1), (e_{10}, B_1)\}$ 
in $I_1$. 
However, the two butterflies in $B_1$ -- $(e_3, e_4, e_5, e_6)$ and $(e_3, e_4, e_9, e_{10})$,
satisfy both preservation conditions for $I_1$, and may be needed when peeling $e_3$ or $e_4$.
In order to account for such butterflies, we \emph{adjust the bloom number} $k_B(I_i)$
to correctly represent the number of butterflies in $B$ that only contain edges from \emph{all} 
partitions $E_j$ such that $j\geq i$~(alg.\ref{alg:fd}, lines 23-24). For example, in 
fig.\ref{fig:beIndexDemo}b, we initialize the bloom number 
of $B_1$ to $k_{B_1}(I_1) = 3$ even though $\abs{N_{B_1}(I_1)}=2$. Thus, $k_{B_1}(I_1)$ correctly 
represents the ${3\choose 2}=3$ butterflies in $B_1$, 
that contain edges only from $\cup_{j\geq 1}\{E_j\}$
\footnote{This is unlike the BE-Index $I$ of graph $G$ where bloom numb
er $k_B(I)=\frac{\abs{N_B(I)}}{2}$~(sec.\ref{sec:beIndex}).}. 

After the BE-Indices for partitions are computed, PBNG FD dynamically schedules partitions 
on threads, where they are processed using sequential 
bottom-up peeling. Here, the aggregate initial support of a partition's edges (given by $\bowtie^{init}_E$ vector) is used
as an indicator of its workload~(alg.\ref{alg:fd}, line 4) for LPT scheduling~~(sec.\ref{sec:fine}).

\section{Analysis}\label{sec:analysis}

\subsection{Correctness of Decomposition Output}\label{sec:correctness}
In this section, we prove the correctness of wing numbers and tip numbers computed by PBNG. 
We will exploit comparisons with sequential \texttt{BUP}~(alg.\ref{alg:bottomup}) 
and hence, first establish the follwing lemmas:

\begin{lemma}\label{lemma:independence}
In \texttt{BUP}, the support $\bowtie_e$ of an edge $e$ at any time $t$ before the first edge with wing number $\geq\theta_e$ is peeled, depends on the cumulative effect of all edges peeled till $t$ and is independent of the 
order in which they are peeled.
\end{lemma}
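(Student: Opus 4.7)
My approach is to show that, under the hypothesis of the lemma, the support update rule of \texttt{BUP} degenerates into a pure decrement, and then to argue that the cumulative decrement applied to $\bowtie_e$ by time $t$ is a set-function of the peeled edges rather than a sequence-function. Concretely, I would fix an edge $e$ and a time $t$ prior to the peeling of any edge with wing number $\geq \theta_e$, and prove the identity
\[
\bowtie_e(t) \;=\; \bowtie_e^{G} \;-\; \bigl|\{\,b \in \mathcal{B}_e \,:\, b \cap S_t \neq \{e\}\,\}\bigr|,
\]
where $\mathcal{B}_e$ is the set of butterflies in $G$ containing $e$ and $S_t$ is the set of edges peeled by time $t$. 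The right-hand side manifestly depends only on $S_t$, which gives the lemma.

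\textbf{Step 1: the \textnormal{max} operation is vacuous.} I would first establish the standard invariant that $\bowtie_e \geq \theta_e$ throughout the execution of \texttt{BUP} until $e$ itself is peeled (since \texttt{BUP} peels in non-decreasing order of wing numbers and $e$'s support at the moment of its own removal is exactly $\theta_e$). Combined with the hypothesis that every edge $e'$ peeled before time $t$ satisfies $\theta_{e'} < \theta_e$, this gives $\bowtie_e - 1 \geq \theta_e - 1 \geq \theta_{e'}$, so each update in alg.~\ref{alg:bottomup} reduces to the plain assignment $\bowtie_e \leftarrow\, \bowtie_e - 1$.

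\textbf{Step 2: each butterfly contributes at most one decrement.} Inspecting the inner loop of \textsc{update}, the butterflies enumerated when peeling $e'$ are precisely those containing $e'$ whose other three edges are still in $E(G)$ at that instant. Therefore, for a fixed butterfly $b \in \mathcal{B}_e$, a decrement to $\bowtie_e$ is issued on behalf of $b$ exactly when the \emph{first} non-$e$ edge of $b$ is peeled; once $b$ has an edge removed, it is no longer enumerated in subsequent \textsc{update} calls. So the total decrement to $\bowtie_e$ accumulated by time $t$ equals the number of butterflies in $\mathcal{B}_e$ with at least one non-$e$ edge in $S_t$, yielding the identity above.

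\textbf{Step 3: conclude.} Since $\bowtie_e^G$ is a fixed initial count and the combinatorial quantity $|\{b \in \mathcal{B}_e : b \cap S_t \neq \{e\}\}|$ depends only on the \emph{set} $S_t$, the support $\bowtie_e$ at time $t$ is invariant under reorderings of the peeling sequence that produce the same $S_t$.

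\textbf{Main obstacle.} The most delicate point is ruling out double-counting and under-counting across reorderings in Step 2: a butterfly may have two or three non-$e$ edges in $S_t$, and one must argue that regardless of the order in which those edges are peeled, exactly one of them triggers the decrement while the others find $b$ already destroyed. This requires carefully pairing the ``edge still in $E(G)$'' guard of \textsc{update} with the invariant in Step 1 to guarantee the decrement is neither skipped nor repeated. Once this bookkeeping is made precise, the rest of the argument is routine.
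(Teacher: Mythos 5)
Your proposal is correct and takes essentially the same route as the paper's proof: both express $\bowtie_e$ at time $t$ as $\bowtie_e^{G}$ minus the number of butterflies containing $e$ that have at least one other edge already peeled, observe that exactly one edge per such butterfly (the first one peeled) issues the decrement, and conclude that this quantity depends only on the set $S_t$ by commutativity. Your Step 1, which checks that the $\max(\theta_{e'},\cdot)$ clamp is vacuous, is a detail the paper leaves implicit and is a welcome addition; the only slip is that your displayed condition $b\cap S_t\neq\{e\}$ should read $b\cap S_t\neq\emptyset$ (since $e\notin S_t$), as your prose in Step 2 already states correctly.
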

\begin{proof}
Let $\bowtie_e^G$ denote the number of butterflies in $G$ that contain $e$, $S$ be the set of 
vertices peeled till time $t$ and $\bowtie_{e, e_a, e_b, e_c}$ denote a butterfly containing $e$ and other edges $\{e_a, e_b, e_c\}$ such that $a>b>c$~(for uniqueness of \looseness=-1representation).
If $e_a$, $e_b$ or $e_c$ are peeled till $t$,
corresponding to the removal of $\bowtie_{e, e_a, e_b, e_c}$, only one of them~(the first edge to be peeled) 
reduces the support $\bowtie_e$ by a unit.
Since \texttt{BUP} peels edges in a non-decreasing order of their wing numbers, 
$\theta_{e'}<\theta_{e}$ for all $e'\in S$.
Hence, current support~(alg.\ref{alg:bottomup}, line 11) of 
$e$ is given by $\bowtie_e\ =\ \bowtie_e^G - \sum_{\bowtie_{e, e_a, e_b, e_c}}{\mathds{1}(e_a\in S\ \texttt{or}\ e_b\in S\ 
\texttt{or}\ e_c\in S)}$. The first term of RHS is constant for a given $G$, and by commutativity of addition, 
the second term is independent of the order in which contribution of individual butterflies are added. Therefore, 
$\bowtie_e$ is independent of the order of peeling in $S$.
\end{proof}

\begin{lemma}\label{lemma:parallel}
Given a set $S$ of edges peeled in an iteration, the parallel peeling in PBNG CD (alg.\ref{alg:cd}, lines 21-33) correctly updates the support of remaining edges in $E(G)$.
\end{lemma}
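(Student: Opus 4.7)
\emph{Proof proposal.} The plan is to reduce the parallel argument to a per-bloom argument using Property~\ref{prop:unique} (every butterfly lies in exactly one maximal priority bloom), then do a careful case analysis on how each peeled edge in $S$ contributes an update through each bloom. Combined with Lemma~\ref{lemma:independence}, which says the cumulative support decrement for any surviving edge is order-independent, it suffices to show that the total decrement delivered to each remaining edge $e'\in E(G)\setminus S$ by the parallel loop equals the number of butterflies it loses when every edge in $S$ is removed from $G$.

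First I would fix an arbitrary remaining edge $e'$ and an arbitrary bloom $B\in N_{e'}(I)$ and enumerate, by Property~\ref{prop:bloomNum}, the butterflies in $B$ that contain $e'$ and at least one edge of $S$. Split $N_B(I)\cap S$ into two groups: those whose twin in $B$ is also in $S$, and those whose twin in $B$ is not in $S$. For the second group, each peeled edge $e\in S$ destroys exactly one butterfly in $B$ that contains $e'$ (the unique butterfly in $B$ pairing $e'$ with $e$ and their respective twins), matching the ``$\bowtie_{e'}\leftarrow \bowtie_{e'}-1$'' update in line~30. For the first group, the pair $\{e,twin(e,B)\}\subseteq S$ destroys the same set of butterflies in $B$, so exactly one of the two should issue the decrement to $\bowtie_{e'}$; the tie-breaking rule in lines~26--27 (highest \texttt{edgeID} wins) guarantees exactly one of the two performs the update. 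I would also check the symmetric statement for the twin update on line~28: if $e\in S$ but $twin(e,B)=e'\notin S$, then by Property~\ref{prop:bloomNum} every one of the $k_B(I)-1$ butterflies in $B$ containing $e'$ also contains $e$, so removing $e$ destroys all of them simultaneously; hence a single bulk decrement by $k_B(I)-1$ is correct, and no other edge in $S\cap N_B(I)$ should hit $e'$ via $B$ (these edges already accounted for those butterflies via their own twins' links).

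The delicate point, and the step I expect to be the main obstacle, is showing that the bloom-number reads used on line~28 are still the \emph{pre-iteration} value $k_B(I)$ for every peeled edge in the iteration, so that every edge whose twin lies outside $S$ subtracts the same correct quantity $k_B(I)-1$. This is guaranteed by the algorithm's two-stage structure: the per-edge parallel loop (lines~22--32) only accumulates $count[B]$, and the bloom-number decrement $k_B(I)\leftarrow k_B(I)-count[B]$ is deferred to the second parallel loop (lines~33--35) after a global barrier. I would argue formally that no thread reads $k_B(I)$ after any other thread has mutated it during the same iteration, so all twin updates see a consistent value, and that the aggregate decrement $count[B]$ correctly equals the number of ``surviving'' twin pairs removed from $B$ in this iteration (using the highest-\texttt{edgeID} rule to avoid double counting).

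Finally I would verify absence of data races on the shared supports: writes to $\bowtie_{e'}$ and $\bowtie_{e_t}$ are performed atomically with the clamped $\max(\theta_e,\cdot)$ operator, and since the update value contributed through distinct blooms corresponds to disjoint butterflies (Property~\ref{prop:unique}), summing these atomic decrements across all $B\in N_{e'}(I)$ and all peeled edges yields exactly the number of butterflies in $G$ containing $e'$ and at least one edge of $S$. Combining this with Lemma~\ref{lemma:independence}, which certifies that this is precisely the decrement \texttt{BUP} would apply for any sequential peeling order of $S$, concludes the proof.
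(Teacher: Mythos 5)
Your proposal is correct and follows essentially the same route as the paper's proof: reduce to per-bloom correctness via Property~\ref{prop:unique}, then case-split on whether a peeled edge's twin is also in $S$, using the highest-index tie-break and the bulk $k_B(I)-1$ twin decrement exactly as the paper does. Your additional observations on the deferred bloom-number update and atomicity are sound implementation-level details that the paper treats only implicitly, but they do not change the substance of the argument.
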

\begin{proof}
Parallel updates are correct if for an edge $e$ not yet assigned to any partition, 
$\bowtie_e$ decreases by exactly $1$ for each butterfly of $e$ deleted in a peeling iteration.
By property~\ref{prop:unique}~(sec.\ref{sec:beIndex}), parallel updates are correct if this holds true for 
butterflies deleted within each bloom.

Consider a bloom $B(W=(U, V), E)$ and let $\bowtie_{e, e_t, e', e'_t}\subseteq B$
be a butterfly containing edges $\{e, e_t, e', e'_t\}$, where $e_t=twin(e, B)$ and $e'_t=twin(e', B)$~(property~\ref{prop:bloomNum}, sec.\ref{sec:beIndex}), such 
that $\bowtie_{e, e_t, e', e'_t}$ is deleted in the $j^{th}$ peeling iteration of
PBNG CD. Clearly, neither of $e$, $e_t$, $e'$ or $e'_t$ have been peeled before $j^{th}$ iteration. We analyze updates to support $\bowtie_e$ corresponding to the deletion of butterfly $\bowtie_{e, e_t, e', e'_t}$. 
If $S$ denotes the set of edges peeled in $j^{th}$ iteration, $\bowtie$ will be deleted if and only if either of 
the following is \looseness=-1true:
\begin{enumerate}[leftmargin=*]
    \item $e\in S\rightarrow$ In this case, $e$ is already assigned to a partition and updates to $\bowtie_e$
    have no impact on the output of PBNG.

    \item $e\notin S$ but $e_t\in S \rightarrow$ $\bowtie_e$ is reduced by exactly $(k_B-1)$ via bloom $B$, which amounts to unit reduction per each butterfly in $B$ that contains $e$~(property~\ref{prop:bloomNum}, sec.\ref{sec:beIndex}). No  updates from other edges are propagated to $\bowtie_e$ via $B$~(alg.\ref{alg:cd}, line 30).
    
    \item $\{e, e_t\}\notin S$ but $e'\in S$ or $e'_t\in S$ $\rightarrow$ $\bowtie_e$ is decreased by exactly $1$
    when $e'/e'_t$ is peeled. If both $\{e', e'_t\}\in S$, then $\bowtie_e$ is decreased by exactly $1$
    when the highest indexed edge among $e'$ and $e'_t$ is peeled~(alg.\ref{alg:cd}, line 26).
    Both the links $(e', B)$ and $(e'_t, B)$ are deleted and bloom number $k_B$ is decreased by $1$. 
    Hence, subsequent peeling of any edges will not generate support updates corresponding to $\bowtie_{e, e_t, e', e'_t}$. 
\end{enumerate}
Thus, the removal of $\bowtie$ during peeling decreases the support of $e$ by exactly $1$ if $e$ is not scheduled for peeling yet. 
\end{proof}

Lemmas~\ref{lemma:independence} and \ref{lemma:parallel} together show that whether we peel a set $S\subseteq E(G)$ in its original order in \texttt{BUP}, or in parallel in any order in PBNG CD, the support of edges with wing numbers higher than all edges in $S$ would be the same after peeling $S$. 
Next, we show that PBNG CD correctly computes the edge partitions corresponding to wing number ranges
and finally prove that PBNG FD outputs correct wing numbers for all edges. For ease of explanation, we use $\bowtie_e(j)$ to denote the support of a edge $e$ after $j^{th}$ peeling iteration in PBNG CD.

\begin{lemma}\label{lemma:cd1}
There cannot exist an edge $e$ such that $e\in E_i$ and $\theta_e \geq \theta(i+1)$.
\end{lemma}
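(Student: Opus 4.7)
The plan is to argue by contradiction. Suppose some edge $e$ is placed in $E_i$ by PBNG CD even though $\theta_e \geq \theta(i+1)$. By the definition of wing number, there exists a $\theta_e$-wing $H\subseteq G$ containing $e$, in which every edge participates in at least $\theta_e$ butterflies that lie entirely in $H$. In particular, every edge $e''\in E(H)$ has $\theta_{e''}\geq \theta_e \geq \theta(i+1)$. My goal is to show that, at the moment $e$ is about to be assigned to $E_i$, none of the edges of $H$ (other than possibly $e$ itself) have been peeled yet; this will force the current support of $e$ in PBNG CD to be at least $\theta_e$, contradicting the condition $\bowtie_e < \theta(i+1)$ under which $e$ is placed into $E_i$.

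To establish that $H\setminus\{e\}$ is disjoint from the set $S$ of edges peeled before $e$ in PBNG CD, I would use strong induction on the PBNG CD peeling order, taking as inductive hypothesis the statement of the lemma itself: for every edge $e'$ peeled earlier, if $e'\in E_{i'}$ then $\theta_{e'} < \theta(i'+1)$. Edges of $S$ that belong to $E_{i'}$ for some $i' < i$ then satisfy $\theta_{e'} < \theta(i'+1)\leq \theta(i)\leq \theta_e$. Edges of $S$ that belong to the current partition $E_i$ (peeled in earlier iterations within range $R_i$) satisfy $\theta_{e'} < \theta(i+1) \leq \theta_e$ by the contradiction hypothesis $\theta_e \geq \theta(i+1)$. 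Hence every edge in $S$ has strictly smaller wing number than every edge of $H$, so $S\cap E(H) = \emptyset$, i.e.\ $H$ survives as a subgraph of $G\setminus S$.

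With $H\subseteq G\setminus S$, the true number of butterflies containing $e$ in $G\setminus S$ is at least $\theta_e \geq \theta(i+1) > \theta(i)$. I would then invoke Lemma~\ref{lemma:parallel} (correctness of parallel support updates) together with Lemma~\ref{lemma:independence} (order-independence of cumulative support) to conclude that the PBNG CD support satisfies $\bowtie_e \geq \theta_e$, since the clamp $\max(\theta(i),\cdot)$ in alg.\ref{alg:cd} only raises the value, and the actual butterfly count is already above $\theta(i)$. Thus $\bowtie_e\geq \theta(i+1)$, contradicting the test $\bowtie_e < \theta(i+1)$ that admits $e$ to $E_i$ in line~9/13 of alg.\ref{alg:cd}.

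The main obstacle I anticipate is justifying the inductive step cleanly for edges peeled within the same partition $E_i$ before $e$: the earlier lemmas are phrased in terms of BUP and must be carefully transported to PBNG CD's parallel setting, and one must verify that the $\max$-clamp at $\theta(i)$ never hides a violation, i.e.\ that when the true butterfly count of $e$ in $G\setminus S$ exceeds $\theta(i)$ the clamp is inactive. A secondary subtlety is ensuring that Lemma~\ref{lemma:independence}'s precondition ``before the first edge with wing number $\geq \theta_e$ is peeled'' is met; this is precisely what the inductive bound $\theta_{e'} < \theta_e$ for $e'\in S$ guarantees, so the structure of the induction and the use of Lemma~\ref{lemma:independence} must be interleaved carefully rather than invoked as a black box.
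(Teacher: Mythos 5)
Your proof is correct and takes essentially the same route as the paper's: an induction on the peeling order (the paper phrases it as ``let $j$ be the first iteration that wrongly peels''), combined with Lemma~\ref{lemma:parallel} to equate the current support of $e$ with its surviving butterfly count, and the definition of wing number to force $\bowtie_e \geq \theta_e \geq \theta(i+1)$, contradicting the admission test. The only cosmetic difference is that you exhibit an explicit $\theta_e$-wing $H$, whereas the paper argues directly with the set $S_{hi}$ of all edges of wing number at least $\theta(i+1)$.
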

\begin{proof}
Let $j$ be the first iteration in PBNG CD that wrongly peels a set of edges $S_{w}$, and assigns them to $E_i$ 
even though $\theta_e \geq \theta(i+1)\ \forall\ e\in S_w$. Let $S_{hi} \supseteq S_{w}$ be the set of all edges 
with wing numbers $\geq \theta(i+1)$ and $S_{lo}$ be the set of edges peeled till iteration $j-1$.  
Since all edges till $j-1$ iterations have been correctly peeled, $\theta_{e} < \theta(i+1)\ \forall\ e\in S_{lo}$.
Hence, $S_{hi} \subseteq E(G)\setminus S_{lo}$. 

Consider an edge $e\in S_{w}$. Since $e$ is peeled in $j^{th}$ iteration, ${\bowtie_e(j-1)} < \theta(i+1)$. From lemma \ref{lemma:parallel}, $\bowtie_e(j-1)$ is equal to the number of butterflies containing $e$ and edges only from $E\setminus S_{lo}$. Since $S_{hi} \subseteq E(G)\setminus S_{lo}$, there are at most $\bowtie_e(j-1)$ butterflies that
contain $e$ and other edges only from $S_{hi}$. By definition of wing number~(sec.\ref{sec:bottomup}), $\theta_e \leq \bowtie_e(j-1) < \theta(i+1)$, which is a contradiction. Thus, no such $e$ exists, $S_{w}=\{\phi\}$, and all edges 
in $E_i$ have tip numbers less than $\theta(i+1)$.
\end{proof}

\begin{lemma}\label{lemma:cd2}
There cannot exist an edge $e$ such that $e\in E_i$ and $\theta_e < \theta(i)$.
\end{lemma}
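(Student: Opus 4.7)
The plan is to proceed by induction on $i$. The base case $i = 1$ is immediate since $\theta(1) = 0$ and wing numbers are nonnegative. For the inductive step, I would assume the lemma holds for all $k < i$; combining this with Lemma~\ref{lemma:cd1} gives $\theta(k) \leq \theta_{e'} < \theta(k+1)$ for every $e' \in E_k$, so $\bigcup_{k < i} E_k \subseteq \{e : \theta_e < \theta(i)\}$. The goal is then to strengthen this to an equality, which immediately forces any edge in $E_i = E(G) \setminus \bigcup_{k < i} E_k \cap (\text{edges peeled in partition } i)$ to satisfy $\theta_e \geq \theta(i)$.

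The heart of the argument is showing that the graph $G'$ remaining after PBNG CD finishes processing partitions $E_1, \ldots, E_{i-1}$ coincides with the $\theta(i)$-wing of $G$, i.e.\ the maximal edge-induced subgraph in which every edge participates in at least $\theta(i)$ butterflies. The containment $G' \subseteq \theta(i)$-wing follows from the termination condition of the while loop at alg.~\ref{alg:cd}, line 10: processing of $E_{i-1}$ continues until every remaining edge has clamped support at least $\theta(i)$. Since the clamp value in that phase is $\theta(i-1) < \theta(i)$, a clamped value $\geq \theta(i)$ forces the true butterfly count to be $\geq \theta(i)$ as well, so $G'$ is $\theta(i)$-dense and hence contained in the maximal such subgraph. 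For the reverse containment, I would use a stability argument: any $e^*$ with $\theta_{e^*} \geq \theta(i)$ has at least $\theta(i)$ butterflies lying entirely inside the $\theta(i)$-wing, and those butterflies survive as long as PBNG CD only peels edges outside the $\theta(i)$-wing. Inducting on the peeling iterations of PBNG CD and invoking Lemma~\ref{lemma:parallel} to certify correct support updates, $e^*$'s support stays $\geq \theta(i) \geq \theta(j+1)$ throughout processing of every $E_j$ with $j < i$, so $e^*$ is never placed in $activeSet$ and hence never peeled.

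The main obstacle is keeping the clamping under control: I need to verify both that ``clamped support $< \theta(j+1)$'' faithfully characterizes the edges PBNG CD intends to peel in partition $E_j$, and that the invariant ``no edge of the $\theta(i)$-wing has been peeled yet'' is preserved across \emph{every} parallel peeling iteration within $E_1, \ldots, E_{i-1}$ rather than only at inter-partition boundaries. Lemmas~\ref{lemma:independence} and~\ref{lemma:parallel} are the key tools here, since they let me treat each parallel iteration of PBNG CD as a collective step of some valid sequential bottom-up peeling schedule on the current graph, so the stability of the $\theta(i)$-wing under edge removals outside it translates directly into the desired invariant for the parallel algorithm. Once both containments are established, $G' = \theta(i)$-wing, so every edge of $E_i \subseteq E(G')$ has $\theta_e \geq \theta(i)$, completing the induction.
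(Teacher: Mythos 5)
Your argument is correct and rests on the same key step as the paper's own proof: the edges surviving past partitions $E_1,\dots,E_{i-1}$ all have support at least $\theta(i)$, and by Lemma~\ref{lemma:parallel} this support equals the number of butterflies lying entirely among the survivors, so the surviving edge set is $\theta(i)$-dense and certifies $\theta_e \geq \theta(i)$ for every edge in it (the paper packages this as a minimal-counterexample contradiction at the end of partition $i$ with threshold $\theta(i+1)$, which is the same statement shifted by one index). The reverse containment you develop (every edge of the $\theta(i)$-wing survives) is not needed for this lemma --- it is essentially Lemma~\ref{lemma:cd1} --- since the forward containment alone already yields $E_i \subseteq E(G')$ and hence the conclusion.
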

\begin{proof}
Let $R_i$ be the smallest range for which there exists an edge $e$ such that $\theta(i)\leq \theta_e < \theta(i+1)$, but $e\in E_{i'}$ for some $i'>i$.
Let $j$ be the last peeling iteration in PBNG CD that assigns edges to $E_i$, and $S_{hi}=\cup_{i'>i}E_{i'}$ denote the set of edges not peeled 
till this iteration.
Clearly, $e\in S_{hi}$ and for each edge $e'\in S_{hi}$, $\bowtie_{e'}(j) \geq \theta(i+1)$, otherwise $e'$ would be peeled in or before the $j^{th}$ 
iteration. 
From lemma \ref{lemma:parallel}, all edges in $S_{hi}$~(including $e$) participate in at least $\theta(i+1)$ butterflies that contain edges only from $S_{hi}$.
Therefore, $e$ is a part of $\theta(i+1)-wing$ (def.\ref{def:kwing})
and by the definition of wing number, $\theta_e \geq \theta(i+1)$, 
which is a contradiction. 
\end{proof}

\begin{theorem}\label{theorem:cd}
PBNG CD (alg.\ref{alg:cd}) correctly computes the edge partitions corresponding to every wing number range.
\end{theorem}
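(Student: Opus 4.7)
The plan is to bundle Lemmas \ref{lemma:cd1} and \ref{lemma:cd2} into a two-sided pinning statement and then use a simple completeness argument to rule out the possibility that any edge is left unassigned or misplaced. Concretely, fix an index $i \in \{1,\dots,P\}$ and an edge $e \in E_i$. Lemma \ref{lemma:cd1} gives $\theta_e < \theta(i+1)$ and Lemma \ref{lemma:cd2} gives $\theta_e \geq \theta(i)$, so $\theta_e \in [\theta(i),\theta(i+1)) = R_i$. Thus every partition $E_i$ is a subset of the set of edges whose wing numbers fall in $R_i$.

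The second half of the proof is to establish the reverse inclusion: every edge $e$ with $\theta_e \in R_i$ actually lands in $E_i$. First I would observe that the partitions $\{E_1,\dots,E_P\}$ are pairwise disjoint, because alg.\ref{alg:cd} line~11 removes an edge from $E(G)$ the moment it is assigned, so no edge can be placed in two partitions. Next, by construction the ranges $\{R_1,\dots,R_P\}$ are contiguous and non-overlapping (sec.\ref{sec:coarse}), and for $P$ chosen sufficiently large we have $\theta(P+1) > \theta^{max}_E$, so $\bigcup_i R_i$ covers every possible wing number. Combined with the forward inclusion already established, and the fact that each edge must be assigned to some $E_i$ (shown below), this forces any edge whose wing number lies in $R_i$ into $E_i$ and nowhere else.

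The only non-trivial sub-claim is that PBNG CD actually peels every edge before terminating. I would verify this by inducting on $i$: the inner while loop at alg.\ref{alg:cd} line~10 keeps extending $activeSet$ and re-invoking \texttt{parallel\_update} until no remaining edge has support strictly below $\theta(i+1)$. By Lemma \ref{lemma:parallel}, the supports maintained by PBNG CD agree with those that would be produced if the peeled edges were removed one by one in \texttt{BUP} order, so every edge whose wing number is less than $\theta(i+1)$ must eventually have its support drop below $\theta(i+1)$ and be peeled into $E_i$ (this is essentially the content of Lemma \ref{lemma:cd2} applied inductively). Once $\theta(P+1) > \theta^{max}_E$, this forces $E(G) = \{\phi\}$ at termination.

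The main obstacle, and the only part requiring care, is the completeness argument: showing that the adaptive range heuristic of sec.\ref{sec:adaptive} does not accidentally leave some edges unpeeled when the outer loop exits. The cleanest way around this is to argue that the theorem's statement is about correctness given the output of PBNG CD, so one may simply assume the loop runs until $E(G)$ is empty (choosing $P$ large enough, or equivalently viewing $P$ as the number of non-empty partitions actually produced). Under this reading the proof reduces to the two-lemma sandwich plus disjointness, and the theorem follows immediately.
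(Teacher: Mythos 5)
Your proposal is correct and takes essentially the same route as the paper: the paper's entire proof of Theorem~\ref{theorem:cd} is the one-line observation that it ``follows directly from Lemmas~\ref{lemma:cd1} and~\ref{lemma:cd2},'' i.e.\ exactly your two-sided sandwich $\theta(i)\leq\theta_e<\theta(i+1)$ for each $e\in E_i$. The additional completeness and disjointness discussion you supply (upgrading the inclusion $E_i\subseteq\{e:\theta_e\in R_i\}$ to equality via coverage of all wing numbers and termination of the outer loop) is care the paper leaves implicit, and is consistent with its argument.
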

\begin{proof}
Follows directly from lemmas~\ref{lemma:cd1} and \ref{lemma:cd2}.
\end{proof}

\begin{theorem}\label{theorem:pbngCorrect}
\text{PBNG} correctly computes the wing numbers for all $e\in E(G)$.
\end{theorem}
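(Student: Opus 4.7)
The plan is to reduce correctness of the full PBNG pipeline to correctness of sequential \texttt{BUP} executed independently on each partition $E_i$ with the reduced BE-Index $I_i$ and initial support $\bowtie^{init}_{E_i}$. The argument combines three ingredients: the partition guarantee of Theorem~\ref{theorem:cd}, the fact that $\bowtie^{init}$ records exactly the right residual support, and an order-independence argument in the spirit of Lemmas~\ref{lemma:independence} and \ref{lemma:parallel} lifted to the BE-Index representation.

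First, I would establish the semantic content of $\bowtie^{init}$. Fix a partition index $i$ and an edge $e\in E_i$. By Theorem~\ref{theorem:cd}, before PBNG CD begins peeling $E_i$ it has already peeled precisely the edges in $\cup_{j<i}E_j$, and by Lemma~\ref{lemma:parallel} the value $\bowtie_e$ copied into $\bowtie^{init}_e$ at alg.~\ref{alg:cd} line 7 equals the number of butterflies in $G$ that contain $e$ together with edges drawn only from $\cup_{j\geq i}E_j$. This is exactly the set of butterflies that PBNG FD must subsequently be able to see for $e$.

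Second, I would verify that $I_i$ faithfully represents those butterflies. By Property~\ref{prop:unique}, each butterfly lives in exactly one maximal priority bloom $B$, so it suffices to argue bloom by bloom. The construction in alg.~\ref{alg:fd} retains link $(e,B)$ in $I_i$ exactly when $e\in E_i$ and $twin(e,B)\in \cup_{j\geq i}E_j$, while the adjusted bloom number $k_B(I_i)$ is computed by a prefix scan so that it counts the number of twin-pairs in $B$ whose two edges both lie in $\cup_{j\geq i}E_j$ (with the deduplication at lines 21--22 preventing double counting when both twins lie in $E_i$). Combined with Property~\ref{prop:bloomNum}, this guarantees that when alg.~\ref{alg:bePeel} peels an edge $e\in E_i$ using $I_i$, the support decrement it applies to each surviving edge equals the number of butterflies through $e$ that actually lie in $\cup_{j\geq i}E_j$ and pass through $B$, and likewise for the twin update $k_B(I_i)-1$. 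Summing over $B\in N_e(I_i)$ thus reproduces the same decrements sequential \texttt{BUP} on $G$ would produce, restricted to the butterflies that have not yet been removed by the peeling of $\cup_{j<i}E_j$.

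Third, I would finish by invoking Lemma~\ref{lemma:independence}. Consider a hypothetical sequential \texttt{BUP} execution on $G$ that peels edges in any non-decreasing order of wing number; by Theorem~\ref{theorem:cd}, such an execution processes all of $\cup_{j<i}E_j$ before any edge of $E_i$, and no edge of $\cup_{j>i}E_j$ is peeled while any edge of $E_i$ remains. Therefore, restricted to edges of $E_i$, the support trajectory of this execution is fully determined by (a) the residual support after $\cup_{j<i}E_j$ has been peeled, which equals $\bowtie^{init}_{E_i}$ by the first step, and (b) the butterflies internal to $\cup_{j\geq i}E_j$ that pass through two or more edges of $E_i$, which are precisely the butterflies encoded in $I_i$ by the second step. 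Hence PBNG FD reproduces the wing numbers that sequential \texttt{BUP} on $G$ would assign to edges of $E_i$, which are correct by the standard peeling argument. Taking the union over $i\in\{1,\dots,P\}$ yields the theorem.

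The step I expect to be the main obstacle is the second one: pinning down exactly what $k_B(I_i)$ counts after the prefix-scan construction in alg.~\ref{alg:fd} and matching it with the bloom-level accounting of Property~\ref{prop:bloomNum}. A careful case split on twin pairs $\{e,twin(e,B)\}$ according to whether both, one, or neither lie in $E_i$ is needed, and one must check that the deduplication condition \emph{$edgeID(e_t)<edgeID(e)$} and the range condition $p(e_t)\geq p(e)$ together yield the intended count without off-by-one errors. Once that invariant is in hand, the surrounding correctness argument is essentially a reassembly of Lemmas~\ref{lemma:independence}--\ref{lemma:cd2}.
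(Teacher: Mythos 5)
Your proposal is correct and follows essentially the same route as the paper's proof: reduce to a sequential \texttt{BUP} run via Theorem~\ref{theorem:cd} and Lemma~\ref{lemma:parallel} to justify $\bowtie^{init}$, then verify bloom-by-bloom (using Properties~\ref{prop:bloomNum} and \ref{prop:unique}) that the retained links and adjusted bloom numbers $k_B(I_i)$ make peeling over $I_i$ generate identical support updates to peeling $E_i$ over $I$. The invariant you flag as the main obstacle is exactly the one the paper states, namely that $k_B(I_i)$ equals the number of twin pairs of $B$ with both edges in $\cup_{j\geq i}E_j$.
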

\begin{proof}
From theorem~\ref{theorem:cd}, $\theta(i)\leq \theta_e< \theta(i+1)$ for
each edge $e\in E_i$.
Consider an edge partition $E_i$ and Let $S_{lo}=E_1\cup E_2\dots E_{i-1}$ denote the set of edges peeled before $E_i$ in PBNG CD. 
From theorem~\ref{theorem:cd}, $\theta_e\geq \theta(i)$ for all edges $e\in E_i$, and $\theta_{e}<\theta(i)$ for all edges $e\in S_{lo}$. 
Hence, $S_{lo}$ will be peeled before $E_i$ in \texttt{BUP} as well. 
Similarly, any edge in $S_{hi}=E_{i+1}\cup E_{i+2}\dots\cup E_{P}$ 
will be peeled after $E_i$ in \texttt{BUP}. Hence, support updates
to any edge $e\in S_{hi}$ have no impact on wing numbers computed for 
edges in $E_i$.

For each edge $e\in E_i$, PBNG FD initializes $\bowtie_e$ using $\bowtie^{init}_e$, which is the support of $e$ in PBNG CD 
just after $S_{lo}$ is peeled. From lemma \ref{lemma:parallel}, 
this is equal to the support of $e$ in \texttt{BUP} just after
$S_{lo}$ is peeled. 

Note that both PBNG FD and \texttt{BUP} employ same algorithm~(sequential bottom-up peeling) to peel edges in $E_i$. Hence, 
to prove the correctness of wing 
numbers generated by PBNG FD, it suffices to show that when an edge in $E_i$ is peeled, support updates propagated to other edges in $E_i$ via each bloom $B$, are the same in PBNG FD and \texttt{BUP}.
\begin{enumerate}[leftmargin=*]
	\item Firstly, every bloom-edge link $(e, B)$ where $e\in E_i$ and $twin(e, B)\in E_i\cup S_{hi}$, is preserved in the partition's
	BE-Index $I_i$. 
	Thus, when an edge $e\in E_i$ is peeled, the set of affected edges in $E_i$ are correctly retrieved from $I_i$~(same as the set retrieved
	from BE-Index $I$ in \texttt{BUP}).
	
	\item Secondly, by construction~(alg.\ref{alg:fd}, lines 21-22 and 23-24), the initial bloom number of $B$ in $I_i$ is equal to the number of those twin edge pairs in $B$, for which
	both edges are in $E_i$ or higher ranged partitions~(not
	necessarily in same partition). Thus, $k_B(I_i)=\frac{\sum_{e\in E(B)}{\mathds{1}\left(\theta_e\geq \theta(i)\ \texttt{and}\ \theta_{twin(e, B)}\geq \theta_i\right)}}{2}$, which in turn is the bloom number $k_B(I)$ 
	in \texttt{BUP} just before $E_i$ is peeled.
\end{enumerate}
Since both PBNG FD and \texttt{BUP} have identical support values
just before peeling $E_i$, the updates computed for edges in 
$E_i$ and the order of edges peeled in $E_i$ will be the same as well. Hence, the final wing numbers computed
by PBNG FD will be equal to those computed by \texttt{BUP}.
\end{proof}

The correctness of tip decomposition in PBNG can be proven in a similar fashion. A detailed derivation
for the same is given in theorem~$2$ of \cite{lakhotia2020receipt},

\subsection{Computation and Space Complexity}

To feasibly decompose large datasets, it is important for an algorithm to
be efficient in terms of computation and memory requirements. 
The following theorems show that for a reasonable upper bound 
on partitions $P$, 
PBNG is at least as computationally efficient as the best sequential decomposition algorithm \texttt{BUP}.

\begin{theorem}\label{theorem:wingComplexity}
For $P=\mathcal{O}\left(\frac{\sum_{e\in E(G)}{\bowtie_{e}^G}}{m}\right)$ partitions, wing decomposition in 
PBNG is work-efficient with computational complexity of 
$\mathcal{O}\left(\alpha\cdot m + \sum_{e\in E(G)}{\bowtie_{e}^G}\right)$,
where $\bowtie_{e}^G$ is the number of butterflies in $G$ that contain $e$.
\end{theorem}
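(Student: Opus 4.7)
The plan is to decompose the total cost of PBNG into four disjoint pieces---(i) butterfly counting plus BE-Index construction, (ii) PBNG CD (alg.\ref{alg:cd}), (iii) the partition-wise BE-Index construction inside PBNG FD (the \texttt{partition\_BE\_Index} routine in alg.\ref{alg:fd}), and (iv) the bottom-up peeling of each partition in PBNG FD---and show that each piece fits inside $\mathcal{O}(\alpha m+\sum_{e}\bowtie_e^G)$.

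First, by sec.\ref{sec:counting} and sec.\ref{sec:beIndex}, Wang et al.'s vertex-priority algorithm supplies both the initial per-edge counts and the BE-Index $I$ in $\mathcal{O}(\alpha m)$ time, and yields $|E(I)|=\mathcal{O}(\alpha m)$ bloom-edge links. For piece (ii), the key invariant to exploit is that each edge is peeled exactly once across the entire CD phase, and that the alg.\ref{alg:bePeel}-style support update for edge $e$ traverses at most $\bowtie_e^G$ links in $I$ (the BE-Index peeling bound recalled in sec.\ref{sec:beIndex}). Summing over all edges gives $\mathcal{O}(\sum_e \bowtie_e^G)$ for the actual peeling work of PBNG CD; the parallel conflict-resolution checks added in alg.\ref{alg:cd} lines 26--30 only inspect quantities already visited and contribute no extra asymptotic cost. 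The per-partition bookkeeping---copying $\bowtie^{init}$, the histogram and prefix scan inside \texttt{find\_range}, and the initial \texttt{activeSet} scan---costs $\mathcal{O}(m)$ per partition; multiplying by the hypothesised $P=\mathcal{O}(\sum_e\bowtie_e^G/m)$ gives $\mathcal{O}(\sum_e\bowtie_e^G)$. The inner-\texttt{while} refresh of \texttt{activeSet} on alg.\ref{alg:cd} line 13 must be handled carefully: I would argue it can be piggybacked on the support-update loop by marking the edges whose support has just entered the current range, so the cumulative cost of all such refreshes is absorbed into the $\mathcal{O}(\sum_e\bowtie_e^G)$ peeling budget.

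For piece (iii), \texttt{partition\_BE\_Index} iterates once through $N_B(I)$ for each bloom $B$, so the outer double loop costs $\mathcal{O}(|E(I)|)=\mathcal{O}(\alpha m)$; the final bloom-number assignment on line 24 is a backward prefix scan whose total cost across all blooms is proportional to the number of (bloom, partition) incidences, which is again bounded by $|E(I)|=\mathcal{O}(\alpha m)$. For piece (iv), each partition $E_i$ is peeled sequentially with the standard BE-Index-based BUP on $I_i$. Because $I_i$ is obtained by pruning $I$ and because $G_i\subseteq G$ forces the live support of every edge $e\in E_i$ to stay bounded by $\bowtie_e^G$, the same BE-Index peeling bound yields $\mathcal{O}(\bowtie_e^G)$ work per edge. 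Summing over all edges and partitions gives $\mathcal{O}(\sum_e\bowtie_e^G)$, and adding the four pieces produces the claimed $\mathcal{O}(\alpha m+\sum_e\bowtie_e^G)$.

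The main obstacle is piece (ii), and in particular ruling out a hidden factor proportional to the number of inner peeling sub-iterations within a single range. The cleanest route is to charge every scan of \texttt{activeSet} after the first to either (a) a bloom-edge link traversed by the preceding support update---of which there are $\mathcal{O}(\sum_e\bowtie_e^G)$ in total---or (b) a fresh edge that has just migrated into the current range, of which there are at most $m$ over the whole phase. A secondary subtlety is bounding the partition-wise prefix scan inside \texttt{partition\_BE\_Index}; the aggregate-incidence argument above is the cleanest way to avoid a naive $\mathcal{O}(P\cdot|U(I)|)$ bound that would otherwise fail for the stated range of $P$.
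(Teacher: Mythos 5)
Your proposal is correct and follows essentially the same route as the paper's proof: the same four-way cost decomposition, the same $\mathcal{O}(m)$-per-partition charge for range computation and the initial $activeSet$ filter (absorbed into $\mathcal{O}(\sum_e\bowtie_e^G)$ by the hypothesis on $P$), the same $\mathcal{O}(\bowtie_e^G)$-per-edge BE-Index peeling bound, and the same observation that each butterfly appears in at most one partition's BE-Index so PBNG FD also costs $\mathcal{O}(\sum_e\bowtie_e^G)$. Your extra care about the inner $activeSet$ refresh and the bloom-number prefix scan just makes explicit what the paper states in one line each (tracking support updates at $\mathcal{O}(1)$ per update, and a constant number of traversals of $E(I)$).
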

\begin{proof}
PBNG CD initializes the edge support using butterfly counting algorithm with ${\mathcal{O}\left(\alpha\cdot m\right)}$ 
complexity. Since $\theta^{max}_E\leq m$, binning for range computation of each partition
can be done using an $\mathcal{O}(m)$-element array, such that $i^{th}$
element in the array corresponds to workload of edges
with support $i$~(alg.\ref{alg:cd}, lines 16-19). 
A prefix scan of the array gives the range 
workload as a function of the upper bound. 
Parallel implementations of binning and prefix scan perform $\mathcal{O}(m)$ computations per partition, amounting to 
$\mathcal{O}\left(P\cdot m\right)$ computations in entire PBNG CD. 
Constructing $activeSet$ for first peeling iteration of each partition requires an $\mathcal{O}(m)$ 
complexity parallel filter on remaining edges. Subsequent iterations construct $activeSet$ by tracking 
support updates doing $\mathcal{O}(1)$ work per update. 
Further, peeling an edge $e$ generates $\mathcal{O}(\bowtie_e^G)$ updates,
each of which can be applied in constant time using BE-Index~(sec.\ref{sec:beIndex}).
Therefore, total complexity of PBNG CD is $\mathcal{O}\left(\sum_{e\in E(G)}{\bowtie_{e}^G} + (P+\alpha)\cdot m\right)$.
	
PBNG FD partitions BE-Index $I$ to create individual BE-Indices for edge partitions. The partitioning~(alg.\ref{alg:fd}, lines 12-25)
requires constant number of traversals of entire set $E(I)$ and hence, 
has a complexity of $\mathcal{O}(\alpha\cdot m)$.
Each butterfly in $G$ is represented in at most one partition's BE-Index. Therefore, PBNG FD will also generate $\mathcal{O}\left(\sum_{e\in E(G)}{\bowtie_{e}^G}\right)$
support updates. 
Hence, the work complexity of PBNG FD is $\mathcal{O}\left(\sum_{e\in E(G)}{\bowtie_{e}^G} + \alpha\cdot m\right)$~(sec.\ref{sec:beIndex}).

The total work done by PBNG is $\mathcal{O}\left(\sum_{e\in E(G)}{\bowtie_{e}^G} + (P+\alpha)\cdot m\right)=\mathcal{O}\left(\sum_{e\in E(G)}{\bowtie_{e}^G} + \alpha\cdot m\right)$ if $P=\mathcal{O}\left(\frac{\sum_{e\in E(G)}{\bowtie_{e}^G}}{m}\right)$, which is the best-known time complexity of \texttt{BUP}. 
Hence, PBNG's wing decomposition is work-efficient.
\end{proof}

\begin{theorem}\label{theorem:tipComplexity}
For $P=\mathcal{O}\left(\frac{\sum_{u\in U}\sum_{v\in N_u}{d_v}}{n\log{n}}\right)$ partitions, tip decomposition 
in PBNG is work-efficient with computational complexity of $\mathcal{O}\left(\sum_{u\in U}\sum_{v\in N_u}{d_v}\right)$.
\end{theorem}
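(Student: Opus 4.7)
The plan is to mirror the two-phase accounting used in Theorem~\ref{theorem:wingComplexity}, but replace the BE-Index-based bounds with bounds that are native to tip decomposition, namely wedge traversals rather than butterfly enumerations. First I would bound the cost of PBNG CD for tip decomposition. Support initialization via the vertex-priority counting algorithm of Section~\ref{sec:counting} performs $\mathcal{O}(\alpha \cdot m) = \mathcal{O}\left(\sum_{u \in U}\sum_{v \in N_u} d_v\right)$ work. For each of the $P$ partitions, the range computation (alg.\ref{alg:cd}, lines 16--19 adapted to vertices) bins the remaining vertices by their current support and prefix-scans the bins; since support values can be as large as $\binom{n}{2}$, this is charged at $\mathcal{O}(n \log n)$ per partition, and $\mathcal{O}(P \cdot n \log n)$ overall. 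The peeling iterations themselves generate one support update per wedge removed, and by lemma~\ref{lemma:parallel}'s tip-decomposition analogue each wedge of $G$ contributes $\mathcal{O}(1)$ work during PBNG CD, so peeling costs $\mathcal{O}\left(\sum_{u \in U}\sum_{v \in N_u} d_v\right)$ in aggregate.

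Next I would analyze PBNG FD. As argued in Section~\ref{sec:tipPBNG}, the subgraphs $\{G_i\}$ are edge-disjoint with total size $\mathcal{O}(m)$, so constructing them is subsumed by the counting cost. The key observation is that a wedge is traversed in $G_i$ only when both of its endpoints lie in $U_i$; hence wedges are partitioned across the $G_i$'s and the total wedge-traversal work in PBNG FD is again at most $\mathcal{O}\left(\sum_{u \in U}\sum_{v \in N_u} d_v\right)$. Sequential bottom-up peeling of $G_i$ maintains a bucket-based priority structure over $U_i$ (or a Fibonacci heap) with amortized $\mathcal{O}(1)$ per update and $\mathcal{O}(|U_i| \log |U_i|)$ initialization; summed over partitions this initialization is $\mathcal{O}\left(\sum_i |U_i| \log n\right) = \mathcal{O}(n \log n)$, which is absorbed by the dominant term.

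Combining the two phases gives total work $\mathcal{O}\left(P \cdot n \log n + \sum_{u \in U}\sum_{v \in N_u} d_v\right)$. Substituting $P = \mathcal{O}\!\left(\frac{\sum_{u \in U}\sum_{v \in N_u} d_v}{n \log n}\right)$ makes the overhead $P \cdot n \log n$ no larger than the dominant wedge-traversal term, yielding the claimed $\mathcal{O}\!\left(\sum_{u \in U}\sum_{v \in N_u} d_v\right)$ bound, which matches the best-known complexity of \texttt{BUP} for tip decomposition derived in Section~\ref{sec:bottomup}. The main obstacle I expect is justifying that wedges are not double-counted across PBNG CD and PBNG FD and that the per-partition range-computation overhead truly fits within $\mathcal{O}(n \log n)$; both rely on (a) the edge-disjointness of the induced subgraphs $G_i$, so that any wedge of $G$ survives into at most one $G_i$, and (b) the fact that support values, although large, take only $\mathcal{O}(n)$ distinct values at any point in the algorithm, so that sorting/hashing-based binning remains within the claimed cost.
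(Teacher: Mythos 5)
Your proposal is correct and follows essentially the same route as the paper: the paper's own proof is a sketch that reduces to the wing-decomposition argument, with the one substantive difference being exactly the point you identify --- tip numbers can be very large, so range determination must bin support values via a hashmap and sort the (at most $n$) distinct keys, costing $\mathcal{O}(P\cdot n\log n)$ overall, which is absorbed once $P=\mathcal{O}\bigl(\sum_{u\in U}\sum_{v\in N_u}d_v/(n\log n)\bigr)$. Your accounting of the wedge-traversal cost in PBNG CD and the edge-disjointness of the induced subgraphs $G_i$ in PBNG FD matches the paper's (deferred) derivation.
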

\begin{proof}
The proof is simliar to that of theorem~\ref{theorem:wingComplexity}.
The key difference from wing decomposition is that maximum tip number can be cubic in the size of the vertex set.
Therefore, range determination uses a hashmap with support values as the
keys. The aggregate workloads of the bins need to be sorted on keys
before computing the prefix scan. Hence, total complexity of range determination for tip decomposition in PBNG CD is 
$\mathcal{O}\left(P\cdot n\log{n}\right)$.
A detailed derivation is given in theorem~$3$ of \cite{lakhotia2020receipt}.
\end{proof}

Next, we prove that PBNG's space consumption is almost similar to the best known sequential algorithms.
\begin{theorem}
Wing decomposition in PBNG parallelized over $T$ threads consumes $\mathcal{O}\left(\alpha\cdot m + n\cdot T\right)$ memory space.
\end{theorem}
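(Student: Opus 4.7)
The plan is to audit every data structure that PBNG allocates during wing decomposition and bound its footprint, then sum the contributions. First I would account for the persistent inputs: the adjacency representation of $G$ takes $\mathcal{O}(m+n)$ space, and the BE-Index $I$ has $|E(I)| = \mathcal{O}(\alpha \cdot m)$ bloom-edge links by the construction described in sec.\ref{sec:beIndex}. The butterfly counting routine (alg.\ref{alg:counting}) allocates one $n$-element $wedge\_count$ array per thread, contributing the per-thread term $\mathcal{O}(n \cdot T)$; this scratch space can be reused by later phases.

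For PBNG CD, the support vector $\bowtie_E$, its snapshot $\bowtie_E^{init}$, the $activeSet$, and the binning array inside \texttt{find\_range} are each $\mathcal{O}(m)$, since $\theta_E^{max} \leq m$ lets the binning fit in an $m$-element array rather than a hashmap. The $count$ map inside \texttt{parallel\_update} is indexed by blooms and is therefore $\mathcal{O}(\alpha \cdot m)$. Combined with $I$ itself, PBNG CD stays within $\mathcal{O}(\alpha \cdot m)$ on top of the carried-over $\mathcal{O}(n \cdot T)$ per-thread buffers.

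The crux of the proof, and the step I expect to be the main obstacle, is bounding the collective footprint of the per-partition indices $\{I_1, \dots, I_P\}$ built in PBNG FD, since a naive accounting would blow up to $\mathcal{O}(mP)$. The key observation to exploit is that \texttt{partition\_BE\_Index} (alg.\ref{alg:fd}, lines 19--20) inserts the link $(e, B)$ into $I_i$ only when $e \in E_i$. Because $\{E_1,\dots,E_P\}$ partitions $E(G)$, every edge $e$ lies in exactly one $E_i$, so each link $(e, B) \in E(I)$ is replicated at most once across the $I_i$'s. Hence $\sum_i |E(I_i)| \leq |E(I)| = \mathcal{O}(\alpha \cdot m)$, and $\sum_i |V(I_i)| \leq \sum_i |E_i| = m$. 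The per-thread working structures used by sequential bottom-up peeling inside a partition (a priority queue or support-indexed bucket over $E_i$, plus the BE-Index traversal state) likewise sum to $\mathcal{O}(m)$ across the $T$ threads because the partitions they own are disjoint, so no extra factor of $T$ appears here. Combining all contributions yields the claimed $\mathcal{O}(\alpha \cdot m + n \cdot T)$ bound, in which the $n \cdot T$ term comes solely from the counting buffers and the $\alpha \cdot m$ term from the original BE-Index together with its partitioned copies.
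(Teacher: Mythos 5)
Your proposal is correct and follows essentially the same route as the paper: the per-thread $\mathcal{O}(n)$ counting buffers give the $n\cdot T$ term, the original BE-Index gives $\mathcal{O}(\alpha\cdot m)$ for PBNG CD, and the key observation that each bloom-edge link $(e,B)$ is placed only in $I_{p(e)}$ (so the partitioned indices collectively occupy at most $|E(I)| = \mathcal{O}(\alpha\cdot m)$) is exactly the paper's argument for PBNG FD. Your additional accounting of the $\mathcal{O}(m)$ auxiliary arrays and the disjointness of per-partition peeling state is a harmless elaboration of details the paper leaves implicit.
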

\begin{proof}
For butterfly counting, each thread uses a private $\mathcal{O}(n)$ array to accumulate wedge counts, resulting
to $\mathcal{O}(n\cdot T)$ space consumption~\cite{wangButterfly}. For peeling:
\begin{enumerate}
    \item PBNG CD uses the BE-Index $I(W=(U, V), E)$ whose space complexity is $\mathcal{O}\left(\alpha\cdot m\right)$. 
    \item PBNG FD uses individual BE-Indices for each partition. Any bloom-edge link $(e, B)\in E(I)$
    exists in at most one partition's BE-Index. Therefore, cumulative 
    space required to store all partitions'
    BE-Indices is 
    $\mathcal{O}\left(\alpha\cdot m\right)$.
\end{enumerate}
Thus, overall space complexity of PBNG's wing decomposition is $\mathcal{O}\left(\alpha\cdot m + n\cdot T\right)$.
\end{proof}

\begin{theorem}
Tip decomposition in PBNG parallelized over $T$ threads 
consumes $\mathcal{O}\left( m + n\cdot T\right)$ memory space.
\end{theorem}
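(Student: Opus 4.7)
The plan is to follow the same structure as the previous theorem on wing decomposition, but substitute the BE-Index-based bounds with the induced-subgraph-based bounds that are specific to tip decomposition as discussed in sec.~\ref{sec:tipPBNG}. I will account for the space consumed by (i)~butterfly counting, (ii)~storage of $G$ together with any auxiliary per-edge/per-vertex arrays used in PBNG CD, and (iii)~the collection of representative subgraphs $\{G_i\}$ used in PBNG FD, then sum the bounds.

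For step~(i), I would invoke the same argument used in the wing decomposition case: the parallel vertex-priority counting algorithm (alg.~\ref{alg:counting}) assigns each of the $T$ threads a private $n$-element \emph{wedge\_count} hashmap, contributing $\mathcal{O}(n\cdot T)$ memory, in addition to the $\mathcal{O}(m)$ space for storing $G$ itself and the $\mathcal{O}(n)$ vectors of per-vertex supports. For step~(ii), PBNG CD for tip decomposition maintains only the graph $G$, the support vector $\bowtie_U$, the support initialization vector $\bowtie^{init}_U$, the bin/hashmap structure used by \textsc{find\_range} (which holds at most one entry per distinct support value and is therefore $\mathcal{O}(n)$), and the $activeSet$ (at most $\mathcal{O}(n)$); this aggregates to $\mathcal{O}(m+n)$ in addition to the per-thread counting overhead.

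For step~(iii), which is the one that is specific to tip decomposition and the main step to argue carefully, I would invoke the key property stated in sec.~\ref{sec:tipPBNG}: each representative subgraph $G_i$ is the subgraph of $G$ induced on the vertex sets $U_i\subseteq U(G)$ and $V(G)$, and the partitions $\{U_1,\dots,U_P\}$ are disjoint. Consequently every edge $(u,v)\in E(G)$ belongs to the unique subgraph $G_{i}$ with $u\in U_i$, so $\sum_{i=1}^{P}|E(G_i)|=m$ and $\sum_{i=1}^{P}|W(G_i)|\leq n+P\cdot|V(G)|=\mathcal{O}(n)$ after we discard isolated vertices in $V(G)$; in any case the total storage is $\mathcal{O}(m+n)$. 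Since partitions are scheduled on at most $T$ threads and each thread runs sequential bottom-up peeling on its current partition, the per-thread working memory for peeling (priority queue on supports, running counts) is $\mathcal{O}(n)$, adding a further $\mathcal{O}(n\cdot T)$.

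Putting these together, the total space is $\mathcal{O}(m+n)$ from graph and induced subgraph storage, plus $\mathcal{O}(n\cdot T)$ from the per-thread scratch arrays of counting and peeling, which simplifies to $\mathcal{O}(m+n\cdot T)$ (noting $n=\mathcal{O}(m+n)$). The only subtlety I expect is cleanly justifying why the \emph{collective} size of the induced subgraphs is $\mathcal{O}(m)$ rather than $\mathcal{O}(mP)$ (as was the case in the wing-decomposition discussion for BE-Indices before partitioning tricks were applied); this follows directly from the disjointness of $\{U_i\}$ and the fact that in tip decomposition any butterfly has at most two vertices in $U(G)$ that are peeled, so no edge needs to be duplicated across subgraphs.
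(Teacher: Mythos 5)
Your proposal is correct and follows essentially the same route as the paper's proof: the two load-bearing facts are the per-thread $\mathcal{O}(n)$ wedge-count arrays used in counting and peeling (giving $\mathcal{O}(n\cdot T)$) and the disjointness of the partitions $\{U_i\}$, which ensures every edge of $G$ lies in at most one induced subgraph $G_i$ so that their collective storage is $\mathcal{O}(m+n)$. Your version merely adds more bookkeeping detail (the \textsc{find\_range} bins, $activeSet$, support vectors) than the paper's terser argument.
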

\begin{proof}
For butterfly counting and peeling in PBNG, each thread uses a private $\mathcal{O}(n)$ array to accumulate wedge 
counts, resulting in $\mathcal{O}(n\cdot T)$ space consumption. 
In PBNG FD, any edge $e\in E(G)$ exists in the induced subgraph of at most 
one partition~(because partitions of $U(G)$ are disjoint). Hence, 
$\mathcal{O}\left( n + m\right)$ space is required to store $G$ and all 
induced subgraphs, resulting in overall $\mathcal{O}\left(m + n\cdot T\right)$ space complexity.
\end{proof}

\section{Optimizations}\label{sec:optimizations}
Despite the use of parallel computing resources, PBNG may consume a lot of time to decompose large graphs such as the \textit{trackers} dataset, that contain several trillion wedges~(for tip 
decomposition) or butterflies~(for wedge decomposition). 
In this section, we propose novel optimization techniques based on the two-phased peeling of PBNG,
that dramatically improve computational efficiency and make it feasible to decompose datasets
like \textit{trackers} in few minutes. 

\subsection{Batch Processing}\label{sec:batch}
Due to the broad range of entity numbers peeled in each iteration of PBNG 
CD~(sec.\ref{sec:coarse}), some iterations may peel a large number of entities.
Peeling individual entities in such iterations requires a large amount of traversal in $G$ or BE-Index $I$. 
However, visualizing such iterations as peeling a \emph{single large set} of 
entities can enable batch optimizations that drastically reduce the required 
computation.\vspace{-0.5em}

\paragraph{Tip Decomposition}
Here, we exploit the fact that (per-vertex) butterfly counting is computationally efficient and
parallelizble~(sec.\ref{sec:counting}). Given a vertex set $activeSet$, the 
number of wedges traversed for peeling $activeSet$ is given by $\wedge(activeSet) = \sum_{u\in activeSet}{\sum_{v\in N_u} d_v}$. 
However, number of wedges traversed for re-counting butterflies for remaining vertices is upper bounded by $\wedge_{cnt} = \sum_{(u,v)\in E}\min\left(d_u, d_v\right)$, which is constant for a given $G$~(sec.\ref{sec:counting}). 
If $\wedge(activeSet)>\wedge_{cnt}$, we \textit{re-compute butterflies} for all 
remaining vertices in $U$ instead of peeling $activeSet$. Thus, computational complexity of
a peeling iteration in PBNG tip decomposition is $\mathcal{O}\left(\alpha\cdot m\right)$.\vspace{-0.5em}

\begin{algorithm}[htbp]
	\caption{Batch computation of support updates from peeling a set of edges $activeSet$}
	\label{alg:batchPeel}
	\begin{algorithmic}[1]
        \Function{\texttt{update}}{$activeSet, \theta_e, \bowtie_{E}, \text{BE-Index}\ I(W=(U, V), E)$}\label{func:update}
        \State{Initialize hashmap $count$ to all zeros}
        \ParForEach{edge $e\in activeSet$}
            \ForEach{bloom $B\in N_e(I)$}
                \State{$e_t \leftarrow twin(e, B),\ \ k_B(I) \leftarrow$ bloom number of $B$ in $I$} 
                \If{$\left(e_t\notin activeSet\right)$\ or\ $\left(edgeID(e_t) < edgeID(e)\right)$}
                    \State{$\bowtie_{e_t}\ \leftarrow\ \max\left(\theta_e,\ \bowtie_{e_t} -\ k_B(I)\right)$}
                    \State{$count[B] \leftarrow count[B]+1$}\Comment{\textit{Aggregate updates at blooms atomically}}
                    \State{$E(I) \leftarrow E(I) \setminus \{(e, B), (e_t, B)\}$}
                    \EndIf
                \EndForEach 
            \EndParForEach
            \ParForEach{bloom $B\in U(I)$ such that $count[B]>0$}
                \State{$k_B(I) \leftarrow k_B(I) - count[B]$}\Comment{\textit{Update bloom number}}
                \ForEach{edge $e'\in N_B(I)$}
                    \State{$\bowtie_{e'} \leftarrow \max\{\theta_e,\ \bowtie_{e'}-count[B]\}$}\Comment{\textit{Update support atomically}}
                \EndForEach 
            \EndParForEach
        \EndFunction
	\end{algorithmic}
\end{algorithm}
\paragraph{Wing Decomposition}
For peeling a large set of edges $activeSet$, we use the batch processing proposed 
in~\cite{wangBitruss}. The key idea is that when an edge $e$ is peeled, the affected edges are 
discovered by exploring the neighborhood of
blooms in $N_e(I)$~(alg.\ref{alg:bePeel}). Therefore, 
the support updates from all edges in $activeSet$ can be 
\emph{aggregated at the blooms}~(alg.\ref{alg:batchPeel}, line 8), 
and then applied via a single traversal of their 
neighborhoods~(alg.\ref{alg:batchPeel}, lines 10-13). 
Thus, computational complexity of a peeling iteration in PBNG wing decomposition is
bounded by the size of BE-Index which is $\mathcal{O}\left(\alpha\cdot m\right)$.
While batch processing using BE-Index was proposed in \cite{wangBitruss}, we note that 
it is significantly more beneficial for PBNG CD compared to bottom-up peeling, 
due to the large number of edges peeled per iteration.

\subsection{Dynamic Graph Updates}\label{sec:dynamic}
After a vertex $u$ (or an edge $e$) is peeled in PBNG, it is excluded from future computation in the 
respective phase. However, due to the undirected nature of the graph, the adjacency list data structure for $G$~(or BE-Index $I$) still contains edges of $u$ (or bloom-edge links of $e$) that are interleaved 
with other edges.
Consequently, wedges incident on $u$ (or bloom-edge links of $e$), though not used in computation, 
are still explored even after $u$ (or $e$) is peeled. 
To prevent such wasteful exploration, we \textit{update the data structures} to remove edges incident on peeled vertices (or bloom-edge links of peeled edges). 

These updates can be performed jointly with the traversal required for
peeling. In tip decomposition, updating vertex support requires traversing
adjacencies of the neighbors of peeled vertices. Edges to peeled vertices can
be removed while traversing neighbors' adjacency lists.
In wing decomposition, updating edge support requires iterating over the affected blooms~(alg.\ref{alg:batchPeel}, lines 10-13) and their
neighborhoods $N_B(I)$. The bloom-edge links incident on peeled edges and their twins can be removed
from $N_B(I)$ during such traversal.

\section{Experiments}
In this section, we present detailed experimental results of 
PBNG for both tip and wing decomposition. In sec.\ref{sec:setup}, we list the datasets and describe
the baselines used for comparison. Secondly, in sec.\ref{sec:resultsWing}, we
provide a thorough evaluation of wing decomposition in PBNG. 
We (a)~compare PBNG against the baselines on 
several metrics, (b)~report empirical benefits of 
optimizations proposed in sec.\ref{sec:optimizations}, (c)~compare workload and execution time of coarse and fine
decomposition phases, and (d)~evaluate parallel scalability
of PBNG. Lastly, in sec.\ref{sec:resultsTip}, we report a similar evaluation of tip decomposition.

\subsection{Setup}\label{sec:setup}
We conduct the experiments on a 36 core dual-socket linux server with two Intel Xeon E5-2695 v4 processors@ 2.1GHz and 1TB DRAM. 
All algorithms are implemented in C++-14 and are compiled using G++ 9.1.0 with the -O3 optimization flag, and OpenMP v4.5 for 
multithreading.\vspace{1mm}

\noindent\textbf{\textit{Datasets}}: We use twelve unweighted bipartite graphs obtained from the KOBLENZ collection~\cite{konect} and Network 
Repository~\cite{rossi2015network}, whose characteristics are shown in table~\ref{table:datasets}.
To the best of our knowledge, these are some of the largest publicly available real-world bipartite datasets. 

\begin{table}[htbp]
\centering
\caption{Bipartite graphs with the corresponding number of butterflies ($\bowtie_G$ in billions), 
maximum tip numbers $\theta^{max}_U$~(for peeling $U$) and $\theta^{max}_V$~(for peeling $V$), and 
maximum wing number $\theta^{max}_E$.}
\label{table:datasets}
\resizebox{\linewidth}{!}{%
\begin{tabular}{|c|c|c|c|c|c|c|c|c|}
\hline
\textbf{Dataset} & \textbf{Description}                        & \textbf{$\mathbf{\abs{U}}$} & $\mathbf{\abs{V}}$ & $\mathbf{\abs{E}}$ & $\mathbf{\boldsymbol{\bowtie}_G}$(in B) & $\mathbf{\boldsymbol{\theta}^{max}_U}$ & $\mathbf{\boldsymbol{\theta}^{max}_V}$ & $\mathbf{\boldsymbol{\theta}^{max}_E}$ \\ \hline
Di-af            & Artists and labels affiliation from Discogs & 1,754,824                   & 270,772            & 5,302,276          & 3.3                                     & 120,101,751                            & 87,016,404                             & 15,498                                 \\ \hline
De-ti            & URLs and tags from www.delicious.com        & 4,006,817                   & 577,524            & 14,976,403         & 22.9                                    & 565,413                                & 409,807,620                            & 26,895                                 \\ \hline
Fr               & Pages and editors from French Wikipedia     & 62,890                      & 94,307             & 2,494,939          & 34.1                                    & 1,561,397                              & 645,790,738                            & 54,743                                 \\ \hline
Di-st            & Artists and  release styles from Discogs    & 1,617,944                   & 384                & 5,740,842          & 77.4                                    & 736,089                                & 1,828,291,442                          & 52,015                                 \\ \hline
It               & Pages and editors from Italian Wikipedia    & 2,255,875                   & 137,693            & 12,644,802         & 298                                     & 1,555,462                              & 5,328,302,365                          & 166,785                                \\ \hline
Digg             & Users and stories from Digg                 & 872,623                     & 12,472             & 22,624,727         & 1,580.5                                 & 47,596,665                             & 3,725,895,816                          &                  166,826                      \\ \hline
En               & Pages and editors from English Wikipedia    & 21,504,191                  & 3,819,691          & 122,075,170        & 2,036                                   & 37,217,466                             & 96,241,348,356                         & 438,728                                \\ \hline
Lj               & Users' group memberships in Livejournal     & 3,201,203                   & 7,489,073          & 112,307,385        & 3,297                                   & 4,670,317                              & 82,785,273,931                         & 456,791                                \\ \hline
Gtr              & Documents and words from Gottron-trec       & 556,078                     & 1,173,226          & 83,629,405         & 19,438                                  &  205,399,233                              &  38,283,508,375                           & 563,244                                \\ \hline
Tr               & Internet domains and trackers in them       & 27,665,730                  & 12,756,244         & 140,613,762        & 20,068                                  & 18,667,660,476                         & 3,030,765,085,153                      & 2,462,017                              \\ \hline
Or               & Users' group memberships in Orkut           & 2,783,196                   & 8,730,857          & 327,037,487        & 22,131                                  & 88,812,453                             & 29,285,249,823                         & -                                      \\ \hline
De-ut            & Users and tags from www.delicious.com       & 4,512,099                   & 833,081            & 81,989,133         & 26,683                                  & 936,468,800                            & 91,968,444,615                         & 1,290,680                              \\ \hline
\end{tabular}
}
\end{table}

\noindent\textbf{\textit{Baselines}}: We compare the performance of PBNG against the following baselines:
\begin{itemize}[leftmargin=*]
    \item \texttt{BUP}$\rightarrow$ sequential bottom-up peeling~(alg.\ref{alg:bottomup}) that does not use BE-Index.
    \item \texttt{ParB}$\rightarrow$P\textsc{ar}B\textsc{utterfly} framework\footnote{\url{https://github.com/jeshi96/parbutterfly}} with the best performing BatchS aggregation 
    method~\cite{shiParbutterfly}. It parallelizes each iteration of \texttt{BUP} using a parallel bucketing structure~\cite{julienne}. 
    \item \texttt{BE\_Batch}~(for wing decomposition only)$\rightarrow$ BE-Index assisted peeling with batch processing 
    optimization~\cite{wangBitruss}, and dynamic deletion of bloom-edge links~(sec.\ref{sec:optimizations}).
    \item \texttt{BE\_PC}~(for wing decomposition only)$\rightarrow$ BE-Index assisted progressive compression peeling approach, proposed 
    in~\cite{wangBitruss}.
    It generates candidate subgraphs top-down in the hierarchy to avoid support updates from peeling edges in lower subgraphs~(small 
    $\theta_e$) to edges in higher subgraphs~(high $\theta_e$). Scaling parameter for support threshold of candidate subgraphs is set to 
    $\tau=0.02$, as specified in~\cite{wangBitruss}.
\end{itemize}
Furthermore, to evaluate the effect of optimizations~(sec.\ref{sec:optimizations}), we create two variants of PBNG:
\begin{itemize}[leftmargin=*]
    \item PBNG-\ $\rightarrow$ PBNG without dynamic graph updates~(sec.\ref{sec:dynamic}).
    \item PBNG-{}-\ $\rightarrow$ PBNG without dynamic graph updates~(sec.\ref{sec:dynamic}) and batch processing optimization~(sec.\ref{sec:batch}).
\end{itemize}

\begin{figure}[htbp]
    \centering
\includegraphics[width=\linewidth]{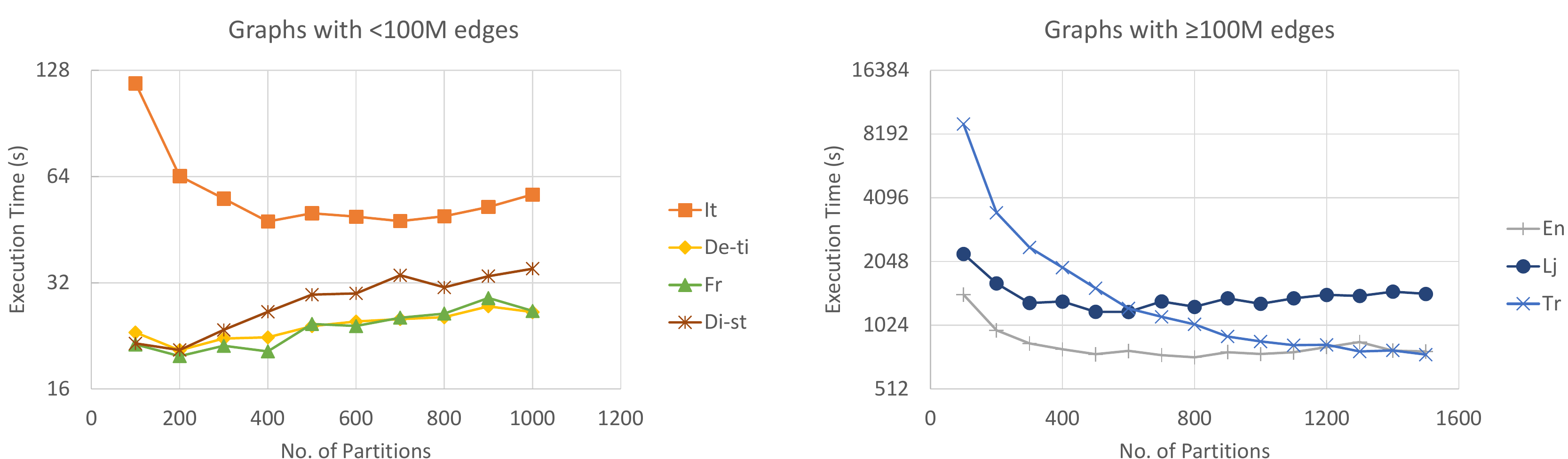}
\caption{Execution time of wing decomposition vs number of partitions $P$ in PBNG.}
    \label{fig:partsWing}
\end{figure}

\noindent\textbf{{Parameter Setting}}: The only user-specified parameter in PBNG is the no. of partitions $P$. For tip decomposition,
we use $P=150$ which was empirically determined in~\cite{lakhotia2020receipt}. For wing decomposition, we measure the runtime
of PBNG as a function of $P$ as shown in fig.\ref{fig:partsWing}. 
Performance of PBNG CD improves with a decrease in $P$ because of reduced peeling iterations
and larger peeling set~(batch size) per iteration. However, 
for PBNG FD, a small value of $P$ reduces parallelism and 
increases the workload. 
Thus, $P$ represents a trade-off between the two phases of PBNG.
Based on our our observations, we set $P=400$  for graphs with $<100$ M edges, and $P=1000$ for graphs with $\geq 100$ M edges. 
We also note that the performance of PBNG is robust~(within $2\times$ of the optimal) in a wide range of $P$ for 
both small and large \looseness=-1datasets.

\subsection{Results: Wing Decomposition}\label{sec:resultsWing}

\subsubsection{Comparison with Baselines}\label{sec:baselineWing}
Table~\ref{table:performanceWing} shows a detailed comparison of PBNG and baseline wing decomposition algorithms. 
To compare the workload of different algorithms, we measure the number of support updates applied in each algorithm~\cite{wangBitruss}.
Note that this may under-represent the workload of \texttt{BUP} and \texttt{ParB}, as they cannot retrieve affected
edges during peeling in constant time. However, it is a useful metric to compare BE-Index based approaches~\cite{wangBitruss}
as support updates represent bulk of the computation performed by during decomposition. 

Amongst the baseline algorithms, \texttt{BE\_PC} demonstrates state-of-the-art execution time and lowest computational 
workload, represented by the number of support updates, due to its top-down subgraph construction approach. 
However, with the two-phased peeling and batch optimizations, support updates in PBNG are at par
or even lower than \texttt{BE\_PC} in some cases. Moreover, most updates in PBNG are applied to a simple array and are relatively
cheaper compared to updates applied to priority queue data structure in all baselines~(including~\texttt{BE\_PC}). 
Furthermore, by utilizing parallel computational resources, PBNG achieves up to $38.5\times$ speedup over \texttt{BE\_PC},
with especially high speedup on large \looseness=-1datasets.

Compared to the parallel framework $\texttt{ParB}$, PBNG is two orders of magnitude or more~(up to $295\times$) faster. 
This is because $\texttt{ParB}$ does not use BE-Index for efficient peeling, does not utilize batch optimizations to reduce computations, and requires large amount of parallel peeling iterations~($\rho$). 
The number of threads synchronizations is directly proportional to $\rho$\footnote{For $\texttt{ParB}$, $\rho$ can be determined by counting peeling iterations in PBNG FD, even if $\texttt{ParB}$ itself is unable
to decompose the graph.}, and PBNG achieves up to $15260\times$ 
reduction in $\rho$ compared to $\texttt{ParB}$. This is primarily 
because PBNG CD peels vertices with a broad range of support in 
every iteration, and PBNG FD does not require a global thread 
synchronization at alll. This drastic reduction in $\rho$ is the 
primary contributor to PBNG's parallel efficiency.

Quite remarkably, PBNG is the \emph{only algorithm to successfully} wing decompose \emph{Gtr} and \emph{De-ut} datasets in few hours,
whereas all of the baselines fail to decompose them in two days\footnote{None of the algorithms could wing decompose \emph{Or} dataset -- \texttt{BE\_Batch}, \texttt{BE\_PC} and PBNG due to large memory requirement of BE-Index, and \texttt{BUP} and \texttt{ParB} due
to infeasible amount of execution time}. Overall, PBNG achieves dramatic reduction in workload, execution
time and synchronization compared to all previously existing algorithms.

\begin{table}[]
\caption{Comparing execution time (\textbf{$\mathbf{t}$}), number of support updates and thread synchronization~(or parallel peeling iterations $\rho$) for PBNG and baseline algorithms. Missing entries denote that execution did not finish in $2$ days. \texttt{ParB} will generate same number of support updates as \texttt{BUP}, and parallel variants of all baselines will have same amount of synchronization~($\rho$)
as \texttt{ParB}.}
\label{table:performanceWing}
\resizebox{\linewidth}{!}{%
\begin{tabular}{cc|c|c|c|c|c|c|c|c|c|c|c|}
\cline{3-13}
\textbf{}                                                                                                     & \textbf{}                           & \textbf{Di-af}                        & \textbf{De-ti}                        & \textbf{Fr}                           & \textbf{Di-st}                        & \textbf{It}                           & \textbf{Digg}                          & \textbf{En}                            & \textbf{Lj}                            & \textbf{Gtr}                           & \textbf{Tr}                           & \textbf{De-ut}                         \\ \hline\hline
\multicolumn{1}{|c|}{}                                                                                        & \texttt{BUP}       & 911                                   & 6,622                                  & 2,565                                  & 9,972                                  & 38,579                                 & -                                      & -                                      & -                                      & -                                      & -                                     & -                                      \\ \cline{2-13} 
\multicolumn{1}{|c|}{}                                                                                        & \texttt{ParB}      & 324                                   & 3,105                                  & 1,434                                  & 2,976                                  & 14,087                                 & -                                      & -                                      & -                                      & -                                      & -                                     & -                                      \\ \cline{2-13} 
\multicolumn{1}{|c|}{}                                                                                        & \texttt{BE\_Batch} & 87                                    & 568                                   & 678                                   & 961                                   & 9,940                                  & -                                      & 78,847                                  & -                                      & -                                      & 54,865                                 & -                                      \\ \cline{2-13} 
\multicolumn{1}{|c|}{}                                                                                        & \texttt{BE\_PC}    & 56                                  & 312                                   & 237                                   & 314                                   & 1756                                  & 37,006                                  & 25,905                                  & 50,901                                  & -                                      & 28,418                                 & -                                      \\ \cline{2-13} 
\multicolumn{1}{|c|}{\multirow{-5}{*}{\textbf{$\mathbf{t}$(s)}}}                                              & PBNG      & {\color[HTML]{009901} \textbf{7.1}}   & {\color[HTML]{009901} \textbf{22.4}}  & {\color[HTML]{009901} \textbf{20.4}}  & {\color[HTML]{009901} \textbf{26.5}}  & {\color[HTML]{009901} \textbf{47.7}}    & {\color[HTML]{009901} \textbf{960}}   & {\color[HTML]{009901} \textbf{748}}    & {\color[HTML]{009901} \textbf{1,293}}   & {\color[HTML]{009901} \textbf{13,253}}  & {\color[HTML]{009901} \textbf{858}}  & {\color[HTML]{009901} \textbf{6,661}}   \\ \hline\hline
\multicolumn{1}{|c|}{}                                                                                        & \texttt{BUP}       & 4.7                                   & 37.3                                  & 39.4                                  & 122                                   & 424                                   & -                                      & -                                      & -                                      & -                                      & -                                     & -                                      \\ \cline{2-13} 
\multicolumn{1}{|c|}{}                                                                                        & \texttt{BE\_Batch} & 2.9                                   & 16.6                                  & 24.8                                  & 33.9                                  & 119.0                                 & 1,413                                   & 1,116                                   & -                                      & -                                      & 679                                   & -                                      \\ \cline{2-13} 
\multicolumn{1}{|c|}{}                                                                                        & \texttt{BE\_PC}    & {\color[HTML]{009901} \textbf{1.1}}   & {\color[HTML]{009901} \textbf{5.7}}   & {\color[HTML]{009901} \textbf{7.6}}   & {\color[HTML]{009901} \textbf{9.9}}   & 33.6                                  & {\color[HTML]{009901} \textbf{592}}    & {\color[HTML]{009901} \textbf{391}}    & 785                                    & -                                      & 390                                   & -                                      \\ \cline{2-13} 
\multicolumn{1}{|c|}{\multirow{-4}{*}{\textbf{\begin{tabular}[c]{@{}c@{}}Updates\\ (billions)\end{tabular}}}} & PBNG      & 1.9                                   & 9                                   & 11                                  & 11.6                                  & {\color[HTML]{009901} \textbf{26.9}}  & 794                                    & 402                                    & {\color[HTML]{009901} \textbf{678}}    & {\color[HTML]{009901} \textbf{5,765}}   & {\color[HTML]{009901} \textbf{164}}   & {\color[HTML]{009901} \textbf{5,530}}   \\ \hline\hline
\multicolumn{1}{|c|}{}                                                                                        & \texttt{ParB}       & 179,177                               & 868,527                               & 181,114                               & 1,118,178                             & 781,955                               & 1,077,389                              & 8,456,797                              & 4,338,205                              & 3,655,765                              & 31,043,711                            & 4,844,812                              \\ \cline{2-13} 
\multicolumn{1}{|c|}{\multirow{-2}{*}{$\mathbf{\rho}$}}                                                       & PBNG      & {\color[HTML]{009901} \textbf{3,666}} & {\color[HTML]{009901} \textbf{5,902}} & {\color[HTML]{009901} \textbf{4,062}} & {\color[HTML]{009901} \textbf{4,442}} & {\color[HTML]{009901} \textbf{4,037}} & {\color[HTML]{009901} \textbf{14,111}} & {\color[HTML]{009901} \textbf{16,324}} & {\color[HTML]{009901} \textbf{19,824}} & {\color[HTML]{009901} \textbf{18,371}} & {\color[HTML]{009901} \textbf{2,034}} & {\color[HTML]{009901} \textbf{15,136}} \\ \hline
\end{tabular}
}
\end{table}

\subsubsection{Effect of Optimizations}
Since dynamic BE-Index updates do not affect the updates generated during peeling, PBNG and PBNG- exhibit the same number of support updates.
Hence, to highlight the benefits of BE-Index updates~(sec.\ref{sec:dynamic}), we also measure the number of bloom-edge links traversed in PBNG with and without the optimizations.
Fig.\ref{fig:optWing} shows the effect of optimizations on the performance of PBNG. 

Normalized performance of PBNG-~(fig.\ref{fig:optWing}) shows that deletion of bloom-edge links~(corresponding to peeled edges) from BE-Index 
reduces traversal by an average of $1.4\times$, and execution time by average $1.11\times$. 
However, traversal is relatively inexpensive compared to support updates as the latter involve atomic computations on array elements~(PBNG CD) or on a priority queue~(PBNG FD).
Fig.\ref{fig:optWing} also clearly shows a direct correlation between the execution time of PBNG-{}- and the number of support updates. 
Consequently, the performance is drastically impacted by batch processing, without which large datasets $\emph{Gtr}$, $\emph{Tr}$ and $\emph{De-ut}$ can not
be decomposed in two days by PBNG-{}-. Normalized performance of PBNG-{}- shows 
that both optimizations cumulatively enable an average reduction of $9.1\times$ and $21\times$ in the number of support updates and execution time of wing decomposition,
respectively. 
This shows that the two-phased approach of PBNG is highly suitable for batch optimization as it peels large number of edges per parallel iteration.

\begin{figure}[htbp]
    \centering
\includegraphics[width=\linewidth]{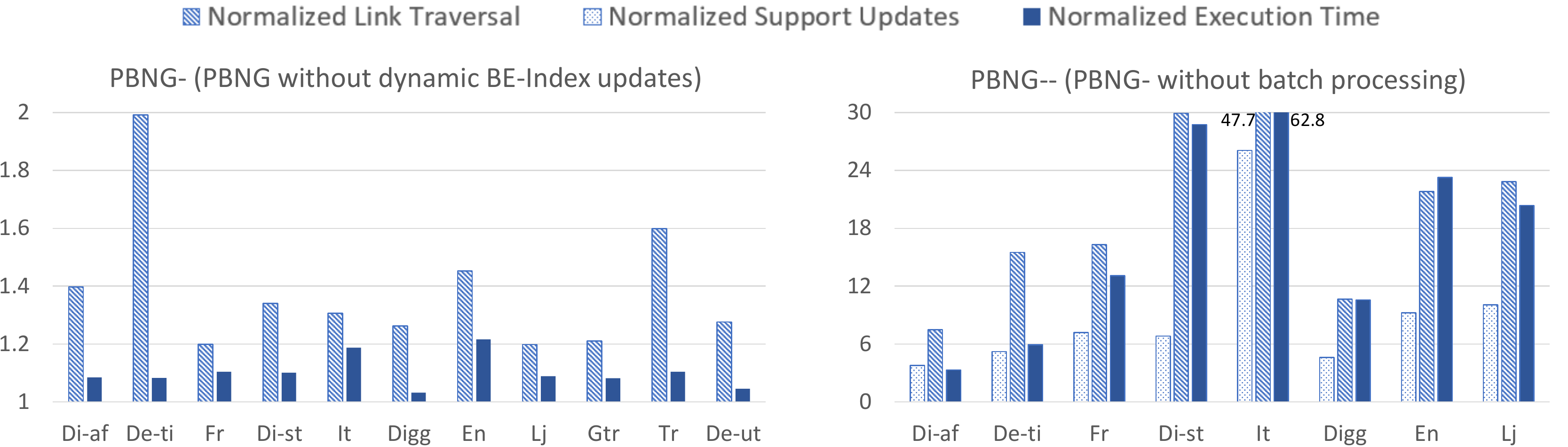}
\caption{Effect of optimizations~(sec.\ref{sec:optimizations}) on wing decomposition in PBNG.
All quantities are normalized with respective measurements for PBNG with all optimizations enabled. With batch processing disabled~(PBNG-{}-), \emph{Gtr, Tr} and \emph{De-ut} did not finish within $2$ days.}
    \label{fig:optWing}
\end{figure}

\subsubsection{Comparison of Different Phases}
\begin{figure}[htbp]
    \centering
\includegraphics[width=\linewidth]{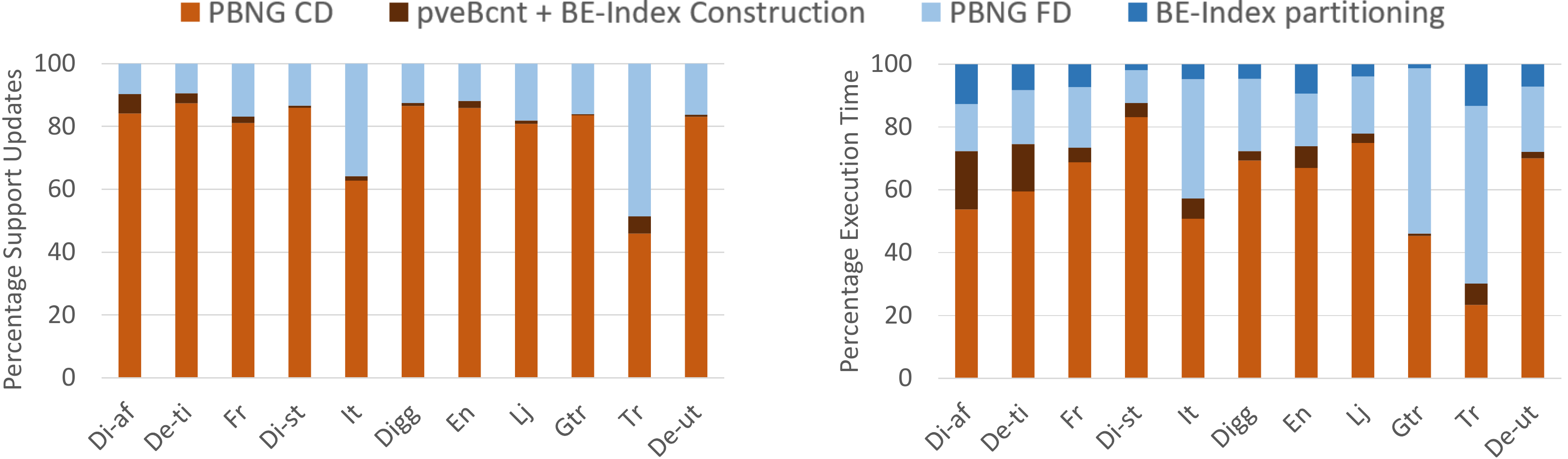}
\caption{Contribution of different steps to the overall support updates and the execution time of wing decomposition in PBNG}
    \label{fig:stepsWing}
\end{figure}

Fig.\ref{fig:stepsWing} shows a breakdown of the support updates and execution time of PBNG across different steps, namely initial butterfly 
counting and BE-Index Construction, peeling in PBNG CD,  BE-Index partitioning\footnote{BE-Index partitioning does not update the support of 
edges.} and peeling in PBNG FD.
For most datasets, PBNG CD dominates the overall workload, 
contributing more than $60\%$ of the support updates for most 
graphs. In some datasets such as \emph{Tr} and \emph{Gtr}, the
batch optimizations drastically reduce the workload of PBNG CD, 
rendering PBNG FD as the dominant phase.
The trends in execution time are largely similar to those of support updates. However, due to
differences in parallel scalability of different steps, contribution of PBNG FD to execution time of several datasets is slightly 
higher than its corresponding contribution to support updates. We also observe that peeling in PBNG CD and PBNG FD is 
much more expensive compared to BE-Index construction and partitioning.

\subsubsection{Scalability}
\begin{figure}[htbp]
    \centering
\includegraphics[width=\linewidth]{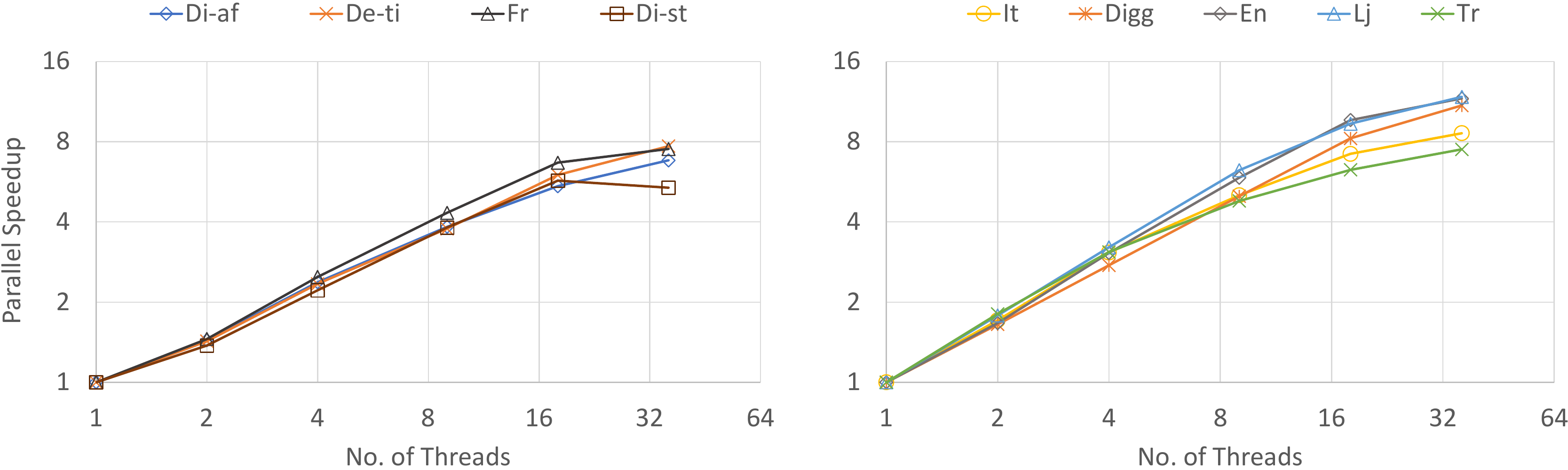}
\caption{Strong scaling of wing decomposition in PBNG. Datasets
shown on the left were decomposed in less than a minute, and on the
right take several minutes.}
    \label{fig:scaleWing}
\end{figure}
Fig.\ref{fig:scaleWing} demonstrates the parallel speedup of PBNG over sequential
execution\footnote{We also compared a serial implementation of PBNG~(for both
wing and tip decomposition) with no synchronization primitives~(atomics) and 
sequential implementations of kernels such as prefix scan. However, the observed 
performance difference between such implementation and single-threaded execution 
of parallel PBNG was negligible.}. Overall, PBNG provides an average $8.7\times$ 
parallel speedup with $36$ threads, which is significantly better than $\texttt{ParB}$. 
Furthermore, the speedup is generally higher
for large datasets~(up to $11.8\times$ for \emph{Lj}), which are
highly time consuming. 
Contrarily, $\texttt{ParB}$ achieves an average $2.6\times$ speedup over sequential $\texttt{BUP}$, due to the large amount of synchronization.

We also observe that PBNG \emph{consistently accelerates} 
decomposition up to $18$ threads~(single socket), providing average $7.2\times$ 
parallel speedup. However, scaling to two sockets~(increasing 
threads from $18$ to $36$) only fetches $1.2\times$ speedup on 
average. This could be due to NUMA effects which increases the
cost of memory accesses and atomics. This can significantly impact 
the performance as PBNG's workload is dominated by traversal of the 
large BE-Index and atomic support updates.
Further, edges contained in a large number of butterflies
may receive numerous support updates, which increases
coherency traffic and reduces scalability. 

\subsection{Results: Tip Decomposition}\label{sec:resultsTip}
To evaluate tip decomposition in PBNG, we select $6$ of the largest 
datasets from table~\ref{table:datasets} and individually decompose both
vertex sets in them. Without loss of generality, we label the vertex set
with higher peeling complexity as $U$ and the other as $V$. 
Corresponding to the set being decomposed, we suffix the dataset name with $U$ or $V$.

\subsubsection{Comparison with Baselines}
\begin{table}[htbp]
\centering
\caption{Comparing execution time ($t$), \# wedges traversed and \# synchronization rounds (or parallel peeling iterations $\rho$) of PBNG and baseline algorithms for tip decomposition. \texttt{ParB} traverses the same \# wedges as \texttt{BUP} and has missing entries due to out-of-memory error. \texttt{BUP} could not decompose \emph{TrU} in $2$ days.}
\label{table:performance}
\resizebox{\linewidth}{!}{%
\begin{tabular}{cc|c|c|c|c|c|c|c|c|c|c|c|l|}
\cline{3-14}
\textbf{}                                                                                                                 & \textbf{}                      & \textbf{EnU}                          & \textbf{EnV}                         & \textbf{LjU}                          & \textbf{LjV}                         & \textbf{GtrU}                    & \textbf{GtrV}                    & \textbf{TrU}                          & \textbf{TrV}                          & \textbf{OrU}                          & \textbf{OrV}                          & \textbf{De-utU}                       & \textbf{De-utV}                      \\ \hline\hline
\multicolumn{1}{|c|}{}                                                                                                    & \texttt{BUP}  & 111,777                               & 281                                  & 67,588                                & 200                                  &  12,036                                &          221                        & -                                     & 5,711                                 & 39,079                                & 2,297                                 & 12,260                                & 428                                  \\ \cline{2-14} 
\multicolumn{1}{|c|}{}                                                                                                    & \texttt{ParB} & -                                     & 198                                  & -                                     & 132.5                                & -                                 &    163.9                              & -                                     & 3,524                                 & -                                     & 1,510                                 & -                                     & 377.7                                \\ \cline{2-14} 
\multicolumn{1}{|c|}{\multirow{-3}{*}{\textbf{$\mathbf{t}$(s)}}}                                                          & PBNG & {\color[HTML]{009901} \textbf{1,383}} & {\color[HTML]{009901} \textbf{31.1}} & {\color[HTML]{009901} \textbf{911.1}} & {\color[HTML]{009901} \textbf{23.7}} & {\color[HTML]{009901} \textbf{163.9}} & {\color[HTML]{009901} \textbf{26.5}} & {\color[HTML]{009901} \textbf{2,784}} & {\color[HTML]{009901} \textbf{530.6}} & {\color[HTML]{009901} \textbf{1,865}} & {\color[HTML]{009901} \textbf{136}}   & {\color[HTML]{009901} \textbf{402.4}} & {\color[HTML]{009901} \textbf{32.4}} \\ \hline\hline
\multicolumn{1}{|c|}{}                                                                                                    & \texttt{BUP}  & 12,583                                & 29.6                                 & 5,403                                 & 14.3                                 &          3,183                        &       26.6                           & 211,156                               & 1,740                                 & 4,975                                 & 231.4                                 & 2,861                                 & 70.1                                 \\ \cline{2-14} 
\multicolumn{1}{|c|}{\multirow{-2}{*}{\textbf{\begin{tabular}[c]{@{}c@{}}Wedges\\ (billions)\end{tabular}}}} & PBNG & {\color[HTML]{009901} \textbf{2,414}} & {\color[HTML]{009901} \textbf{22.2}} & {\color[HTML]{009901} \textbf{1,003}} & {\color[HTML]{009901} \textbf{11.7}} & {\color[HTML]{009901} \textbf{1,526}} &    {\color[HTML]{009901} \textbf{28.2}}             & {\color[HTML]{009901} \textbf{3,298}} & {\color[HTML]{009901} \textbf{658.1}} & {\color[HTML]{009901} \textbf{2,728}} & {\color[HTML]{009901} \textbf{170.4}} & {\color[HTML]{009901} \textbf{1,503}} & 51.3                                 \\ \hline\hline
\multicolumn{1}{|c|}{}                                                                                                    & \texttt{ParB} & 1,512,922                             & 83,800                               & 1,479,495                             & 83,423                               &           491,192                       &             73,323                     & 1,476,015                             & 342,672                               & 1,136,129                             & 334,064                               & 670,189                               & 127,328                              \\ \cline{2-14} 
\multicolumn{1}{|c|}{\multirow{-2}{*}{$\mathbf{\rho}$}}                                                                   & PBNG & {\color[HTML]{009901} \textbf{1,724}} & {\color[HTML]{009901} \textbf{453}}  & {\color[HTML]{009901} \textbf{1,477}} & {\color[HTML]{009901} \textbf{456}}  & {\color[HTML]{009901} \textbf{2,062}} & {\color[HTML]{009901} \textbf{362}} & {\color[HTML]{009901} \textbf{1,335}} & {\color[HTML]{009901} \textbf{1,381}} & {\color[HTML]{009901} \textbf{1,160}} & {\color[HTML]{009901} \textbf{639}}   & {\color[HTML]{009901} \textbf{1,113}} & {\color[HTML]{009901} \textbf{406}}  \\ \hline
\end{tabular}
}
\end{table}

Table~\ref{table:performance} shows a detailed comparison of various tip 
decomposition algorithms. 
To compare the workload of tip decomposition algorithms, we measure the
number of wedges traversed in $G$. Wedge traversal is required to compute 
butterflies between vertex pairs during counting/peeling, and represents 
bulk of the computation performed in tip decomposition .

With up to $80.8\times$ and 
$64.7\times$ speedup over \texttt{BUP} and \texttt{ParB}, 
respectively, PBNG is dramatically faster than the baselines, for \textit{all} 
datasets. Contrarily, \texttt{ParB} achieves a maximum $1.6\times$ speedup 
compared to sequential \texttt{BUP} for \textit{TrV}
dataset. The speedups are typically higher for large datasets that offer large amount of computation to parallelize and benefit more from batch optimization~(sec.\ref{sec:batch}). 

Optimization benefits are also evident in the wedge traversal
of PBNG\footnote{Wedge traversal by \texttt{BUP} can be computed without executing alg.\ref{alg:bottomup}, by simply aggregating the $2$-hop neighborhood size of vertices in $U$ or $V$.}. For 
\textit{all} datasets, PBNG traverses fewer wedges than the  
baselines, achieving up to $64\times$ reduction in 
wedges traversed. Furthermore, PBNG achieves up to $1105\times$
reduction in synchronization~($\rho$) over $\texttt{ParB}$, 
due to its two-phased peeling approach. 
The resulting increase in parallel efficiency and workload 
optimizations enable PBNG to 
decompose large datasets like \textit{EnU} 
in few minutes, unlike baselines that take few days for the same.

Quite remarkably, PBNG is the \emph{only algorithm to successfully} tip 
decompose \emph{TrU} dataset within an hour,
whereas the baselines fail to decompose it in two days. 
We also note PBNG can decompose both vertex sets of $\texttt{Or}$ dataset in approximately half an hour, even though none of the algorithms could
feasibly wing decompose the $\texttt{Or}$ dataset~(sec.\ref{sec:resultsWing}). 
Thus, tip decomposition is advantageous over wing decomposition,
in terms of efficiency and feasibility.

\subsubsection{Optimizations}
\begin{figure}[htbp]
    \centering
\includegraphics[width=\linewidth]{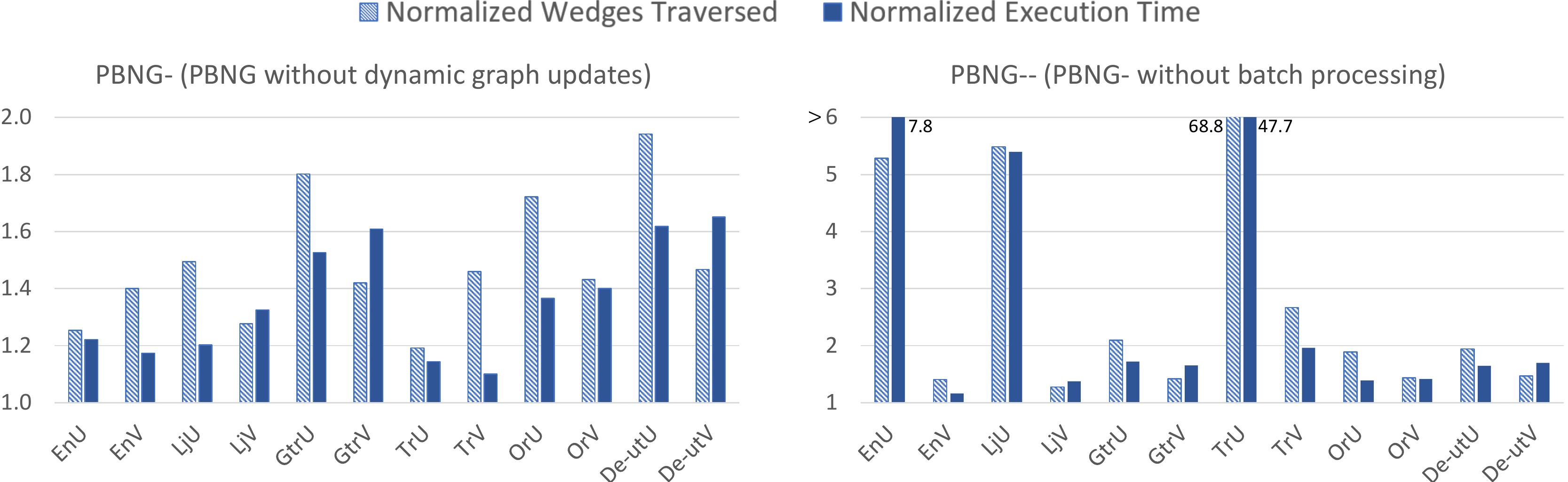}
\caption{Effect of optimizations~(sec.\ref{sec:optimizations}) on tip
decomposition in PBNG.
All quantities are normalized with the respective measurements for PBNG with all optimizations enabled.}
    \label{fig:optTip}
\end{figure}
Fig.\ref{fig:optTip} shows the effect of workload optimizations on tip 
decomposition in PBNG. Clearly, the execution time closely follows the 
variations in number of wedges traversed.

Dynamic deletion of edges~(corresponding to peeled vertices) from adjacency lists
can potentially half the wedge workload since each wedge has two endpoints in 
peeling set. Normalized performance of PBNG-~(fig.\ref{fig:optWing}) shows that 
it achieves $1.41\times$ and $1.29\times$ average reduction in wedges and execution time, \looseness=-1respectively.

Similar to wing decomposition, the batch optimization provides dramatic
improvement in workload and execution time. This is especially true for 
datasets with a large ratio of total wedges with endpoints in peeling set to 
the wedges traversed during counting~(for example, for \emph{LjU, EnU} and \emph{TrU}, this ratio is $>1000$). 
For instance, in \textit{TrU}, both optimizations cumulatively enable 
\textbf{$68.8\times$} and \textbf{$47.7\times$} reduction in wedge traversal and 
execution time, respectively. 
Contrarily, in datasets with small value of this ratio such as \textit{DeV, OrV, LjV} and \emph{EnV}, none of the peeling iterations in PBNG CD utilize re-counting. Consequently, performance of PBNG- and PBNG-{}- is similar for 
these datasets.

\subsubsection{Comparison of phases}
\begin{figure}[htbp]
    \centering
\includegraphics[width=\linewidth]{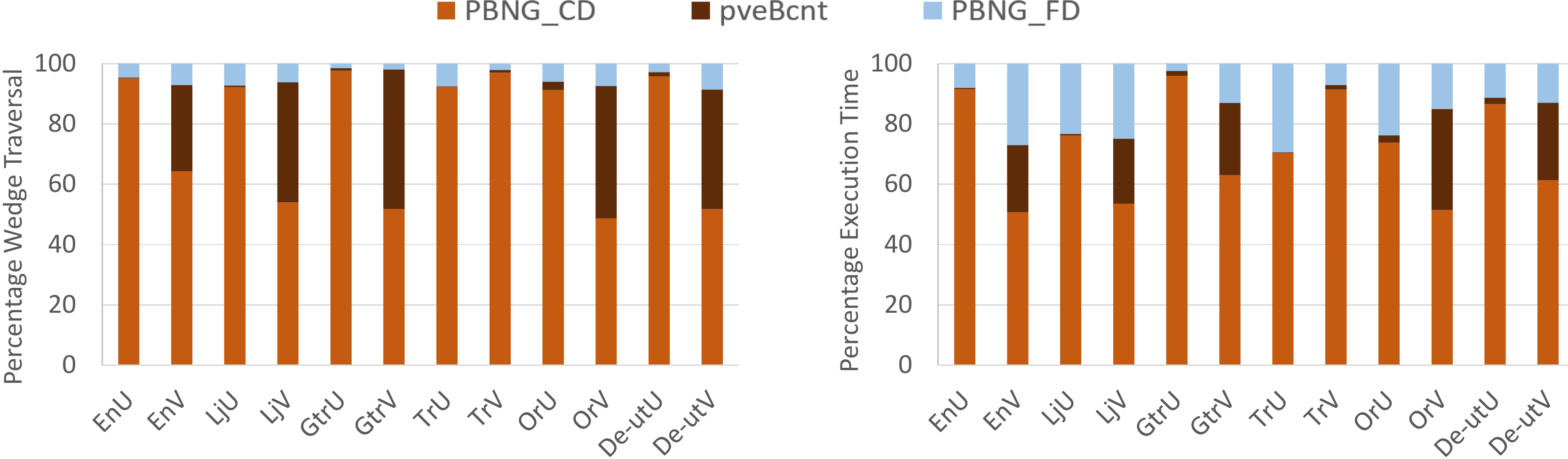}
\caption{Contribution of different steps to the overall wedge traversal and the execution time of tip decomposition in PBNG}
    \label{fig:stepsTip}
\end{figure}

Fig.\ref{fig:stepsWing} shows a breakdown of the wedge traversal and execution 
time of PBNG across different steps, namely initial butterfly 
counting, peeling in PBNG CD and PBNG FD.
As expected, PBNG FD only contributes less than $15\%$ of the total wedge
traversal in tip decomposition. This is because it operates on small subgraphs
that preserve very few wedges of $G$.
When peeling the large workload $U$ vertex set, more than
$80\%$ of the wedge traversal and $70\%$ of the execution time is spent in PBNG 
CD.

\subsubsection{Scalability}
\begin{figure}[htbp]
    \centering
\includegraphics[width=\linewidth]{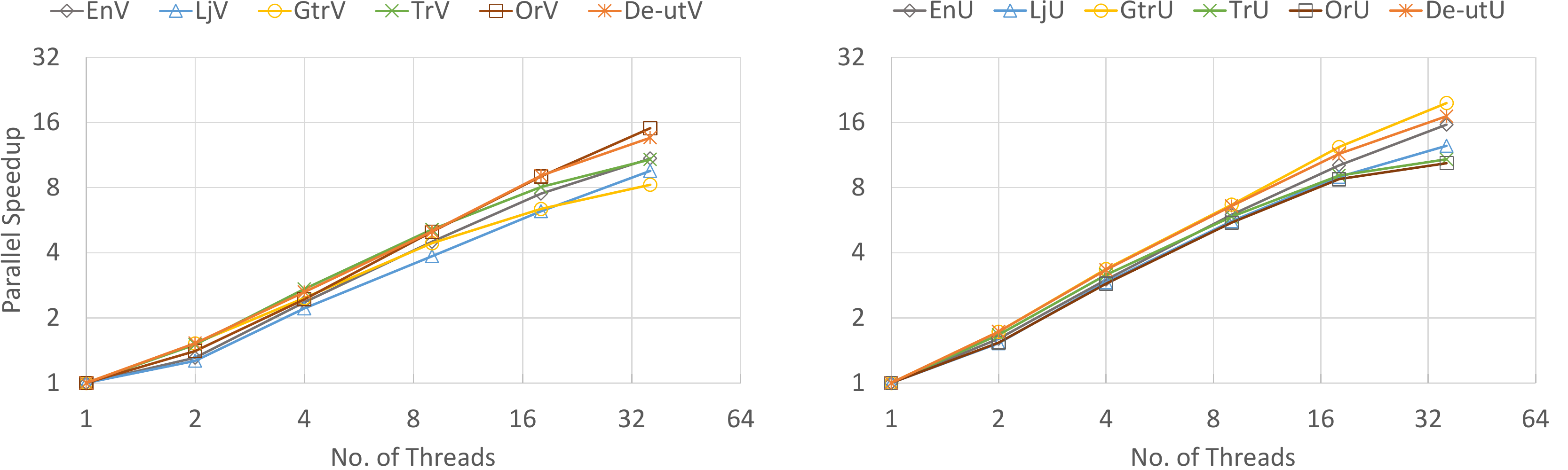}
\caption{Strong scaling of tip decomposition in PBNG}
    \label{fig:scaleTip}
\end{figure}

Fig.\ref{fig:scaleTip} demonstrates the parallel speedup of PBNG over sequential
execution. When peeling the large workload vertex set $U$, PBNG achieves almost 
linear scalability with $14.4\times$ average parallel speedup on $36$ threads, 
and up to $19.7\times$ speedup for $GtrU$ dataset.
Contrarily, $\texttt{ParB}$ achieves an average $1.54\times$ speedup over 
sequential $\texttt{BUP}$, and up to $2.3\times$ speedup for $TrV$ dataset.

Typically, datasets with small amount of wedges (\textit{LjV, EnV}) exhibit 
lower speedup,  because they provide lower workload per synchronization round on 
average. For example, \textit{LjV} traverses $86\times$ fewer wedges than 
\textit{LjU} but incurs only $3.2\times$ fewer synchronizations. This increases 
the relative overheads of parallelization and restricts the parallel scalability 
of PBNG CD, which is the highest workload step in PBNG~(fig.\ref{fig:stepsTip}). 
Similar to wing decomposition, NUMA effects on scalability to multiple sockets~($36$ threads) can be seen in tip decomposition of some datasets such as 
$\emph{OrU}$ and $\emph{TrU}$. However, we still observe $1.4\times$ average
speedup on large datasets when threads are increased from $18$ to $36$ threads.
This is possibly because most of the workload in tip decomposition is comprised 
of wedge traversal which only reads the graph data structure. It incurs much 
fewer support updates, and in turn atomic writes, compared to wing decomposition.

\section{Related Work}
Discovering dense subgraphs and communities in networks is a key operation in several applications~\cite{anomalyDet, spamDet, communityDet, fang2020effective, otherapp1, otherapp2, nathan2017local, staudt2015engineering, riedy2011parallel}.
Motif-based techniques are widely used to reveal dense regions in  graphs~\cite{sariyuce2016fast, fang2019efficient, gibson2005discovering, sariyuce2018local, angel2014dense, trussVLDB, lee2010survey, coreVLDB, coreVLDBJ, wang2018efficient, PMID:16873465, trussVLDB, tsourakakis2017scalable,tsourakakis2014novel, sariyucePeeling, aksoy2017measuring, wang2010triangulation}. 
Motifs like triangles represent a quantum of cohesion in graphs and
the number of motifs containing an entity~(vertex or an edge) acts as 
an indicator of its local density. Consequently, several recent works have 
focused on efficiently finding such motifs in the graphs~\cite{green2018logarithmic, shun2015multicore,ahmed2015efficient, shiParbutterfly, wangButterfly, hu2018tricore, fox2018fast, ma2019linc}.

Nucleus decomposition is a clique based technique for discovering hierarchical 
dense regions in unipartite graphs. 
Instead of per-entity clique count in the entire graph, 
it considers the minimum clique count of a subgraph's entities 
as an indicator of that subgraph's density~\cite{10.1007/s10115-016-0965-5}.
This allows mining denser subgraphs compared to  
counting alone~\cite{10.1007/s10115-016-0965-5, sariyuce2015finding}. 
Truss decomposition is a special and one of the most popular cases of 
nucleus decomposition that uses triangle clique.
It is a part of the GraphChallenge~\cite{samsi2017static} initiative, that has resulted in highly scalable parallel decomposition algorithms~\cite{date2017collaborative,voegele2017parallel,smith2017truss,green2017quickly}. However, nucleus decomposition is not applicable on
bipartite graphs as they do not have \looseness=-1cliques.

The simplest non-trivial motif in a bipartite graph is a Butterfly~(2,2-biclique). Several algorithms for butterfly counting have been developed: in-memory or external memory~\cite{wangButterfly,wangRectangle}, exact or approximate counting~\cite{sanei2018butterfly,sanei2019fleet} and parallel counting on various platforms~\cite{shiParbutterfly, wangButterfly, wangRectangle}. 
The most efficient approaches are based on Chiba and Nishizeki's~\cite{chibaArboricity} vertex-priority butterfly counting algorithm. Wang et al.\cite{wangButterfly} propose a cache optimized variant 
of this algorithm and use shared-memory parallelism for acceleration. Independently, Shi et al.\cite{shiParbutterfly} develop provably efficient shared-memory parallel implementations of this algorithm. 
Notably, their algorithms are able to extract parallelism at the 
granularity of wedges explored. Such fine-grained parallelism can
also be explored for improving parallel scalability of PBNG.

Inspired by $k$-truss, Sariyuce et al.\cite{sariyucePeeling} defined $k$-tips 
and $k$-wings as subgraphs with minimum $k$ butterflies incident on every 
vertex and edge, respectively. Similar to nucleus decomposition
algorithms, they designed \emph{bottom-up peeling} algorithms to find hierarchies 
of $k$-tips and $k$-wings. Independently, Zou~\cite{zouBitruss} defined the
notion of bitruss similar to $k$-wing. 
Shi et al.\cite{shiParbutterfly} propose the $P\textsc{ar}B\textsc{utterfly}$ 
framework that parallelizes individual peeling iterations.

Chiba and Nishizeki~\cite{chibaArboricity} proposed that
wedges traversed in butterfly counting algorithm
can act as a space-efficient representation of all
butterflies in the graph. Wang et al.\cite{wangBitruss} propose a
butterfly representation called BE-Index~(also derived from the counting algorithm) for efficient support updates during edge peeling. Based on BE-Index, they develop several peeling algorithms for wing decomposition that
achieve state-of-the-art computational efficiency and are used
as baselines in this paper. 

Very recently, Wang et al.\cite{wang2021towards} also proposed parallel 
versions of BE-Index based wing decomposition algorithms. 
Although the code is not publicly available, authors report the performance on few graphs. Based on the reported results, PBNG
significantly outperforms their parallel algorithms as well.
Secondly, these algorithms parallelize individual peeling iterations
and will incur heavy synchronization similar to $P\textsc{ar}B\textsc{utterfly}$~(table~\ref{table:performanceWing}). 
Lastly, they are only designed for wing decomposition, whereas PBNG 
comprehensively targets both wing and tip decomposition. 
This is important because tip decomposition in PBNG  
(a)~is typically faster than wing decomposition, and (b)~can 
process large datasets like \emph{Orkut} in few minutes, that none of 
the existing tip or wing decomposition algorithms can process in 
several days.


\section{Conclusion and Future Work}
In this paper, we studied the problem of 
bipartite graph decomposition which is a computationally demanding 
analytic for which the existing
algorithms were not amenable to efficient parallelization.
We proposed a novel parallelism friendly 
two-phased peeling framework called PBNG, 
that is the first to exploit parallelism
across the levels of decomposition hierarchy. The proposed
approach further enabled novel optimizations that drastically reduce
computational workload, allowing bipartite decomposition
to scale beyond the limits of current practice.

We presented a comprehensive empirical evaluation of PBNG on a 
shared-memory multicore server and showed that it can process 
some of the largest publicly available bipartite datasets two orders of 
magnitude faster than the state-of-the-art. 
PBNG also achieved a dramatic reduction in thread synchronization, allowing
up to $19.7\times$ self-relative parallel speedup on $36$ threads.

There are several directions for future research in the context of this work.
The proposed two-phased peeling can open up avenues for distributed-memory
parallel bipartite decomposition on HPC clusters. Clusters allow 
scaling of both computational and memory resources, which is crucial for 
decomposing large graphs. To further improve the performance of PBNG,
we will also explore fine-grained parallelism, and techniques to avoid atomics and enhance memory access locality. Such 
system optimizations have been shown to be highly effective
in conventional graph processing frameworks~\cite{shun2013ligra,lakhotia2020gpop,zhang2018graphit}. 
We believe that two-phased peeling could also benefit 
nucleus decomposition in unipartite graphs, which is an interesting 
direction to explore.

\begin{small}
{\setlength{\parindent}{0cm}
\paragraph*{\textbf{Acknowledgement}}
This material is based on work supported by the Defense Advanced Research Projects Agency (DARPA) under Contract Number FA8750-17-C-0086, and National Science Foundation (NSF) under Grant Numbers CNS-2009057 and OAC-1911229. Any opinions, findings and conclusions or recommendations	expressed in this material are those of the authors and do not necessarily reflect the views of DARPA or NSF. The U.S. Government is authorized to reproduce and distribute reprints for Government purposes notwithstanding any copyright notation here on.
}
\end{small}

\bibliographystyle{acm}
\bibliography{bibliography} 

\end{document}